\documentclass[12pt]{article}

\usepackage[margin=1in]{geometry}
\usepackage{amsmath,amsfonts,amssymb,amsthm}
\usepackage{enumitem}
\usepackage{booktabs,longtable,array}
\usepackage[round]{natbib}
\usepackage[hidelinks]{hyperref}
\usepackage{newpxtext,newpxmath}
\usepackage{microtype, centernot, csquotes, tikz}
\usetikzlibrary{shapes.geometric}
\newtheorem{theorem}{Theorem}
\newtheorem{lemma}{Lemma}
\newtheorem{proposition}{Proposition}

\newtheorem{remark}{Remark}
\newtheorem{definition}{Definition}

\newtheorem{example}{Example}

\newcommand{\lex}{\mathrm{lex}}
\newcommand{\R}{\mathbb{R}}
\newcommand{\I}{\mathbb{I}}
\newcommand{\N}{\mathbb{N}}

\newcommand{\calS}{\mathcal{S}}
\newcommand{\Z}{\mathbb{Z}}
\newcommand{\OmegaC}{\Omega}

\title{On the construction and representation of social welfare orders satisfying consequentialist equity axioms}
\author{Ram Sewak Dubey\thanks{%
Economics Department, Feliciano School of Business, Montclair State
University, Montclair, NJ 07043, USA; E-mail: dubeyr@montclair.edu} \and %
Giorgio Laguzzi\thanks{%
Universita~ del Piemonte Orientale, 15121 Alessandria, Italy; Email:
giorgio.laguzzi@uniupo.it}}
\date{\today }

\begin{document}

\maketitle
\begin{abstract}
In this paper we examine the constructive nature of social welfare orders on infinite utility streams $X=Y^{\mathbb{N}}$ satisfying Strong Equity, Hammond Equity, or the Pigou--Dalton transfer principle.
The constructive social welfare orders are described using lexicographic preference relations.
Social welfare orders satisfying Strong Equity, Hammond Equity, or the Pigou--Dalton transfer principle admit explicit descriptions when $Y(<)$ is well-ordered.
We describe restrictions on the domain $Y$ under which the existence of social welfare orders satisfying the aforementioned equity axioms entails the existence of a non-Ramsey collection.
For this, we rely on the existence of a non-Ramsey collection, which is treated here as a nonconstructive object.

\begin{flushleft}
\emph{Journal of Economic Literature} Classification Numbers: \texttt{D60,} \texttt{D70,} \texttt{D90.}
\end{flushleft}
\emph{Keywords and Phrases}: \texttt{Construction,}  \texttt{Hammond Equity,} \texttt{Lexicographic order,} \texttt{Non-Ramsey Set,} \texttt{Pigou-Dalton transfer principle,} \texttt{Representation,} \texttt{Social Welfare Orders,} \texttt{Strong Equity.}
\end{abstract}

\newpage
\section{Introduction}

\emph{Intergenerational equity} deals  with the issue of how to treat the well-being of future generations relative to the welfare of those living at present.
In case the future extends indefinitely, we are required to evaluate infinite utility streams consistently with social preferences that respect suitable equity and efficiency requirements.

We focus on social welfare orders (SWOs), complete and transitive binary relations on the set \(X\) of infinite utility streams, where \(X\) takes the form \(X=Y^{\N}\), with \(Y\) being a non-empty subset of real numbers \(\R\), and \(\N\) being the set of natural numbers.
A social welfare function (SWF) is a real-valued map on \(X\).
If an SWO can be represented by an SWF, then the social ranking of any two utility streams can be recovered from the corresponding real values assigned by the SWF.
The issue of representation is, however, conceptually different from the issue of construction of the SWOs.
The latter concept involves the restriction that the use of the \emph{Axiom of Choice} (AC), or a similar non-constructive device, is not permitted.
In contrast, a real-valued representation may itself be obtained by using the Axiom of Choice or a similar device.

The equity principles that we study in this paper, involve an alteration in the distribution of utilities and are called consequentialist equity concepts.
Hammond Equity and the Pigou-Dalton transfer principle are two important consequentialist equity concepts.
Strong Equity is a notable strengthening of Hammond Equity (see \citet{aspremont1977}).
The issues of representation and construction of SWOs on infinite utility streams satisfying some notion of consequentialist equity is the central goal here.

Strong Equity involves comparisons between two utility streams, \(x\) and \(y\), in which all generations except two have the same utility levels in both utility streams.
Regarding the two remaining generations, say \(i\) and \(j\), one of the generations, say \(i\), is better off in utility stream \(x\), and the other generation, \(j\), is better off in utility stream \(y\).
The axiom states that if, in both utility streams, it is generation \(i\) which is worse off than generation \(j\), then generation \(i\) should be allowed, on behalf of society, to choose between \(x\) and \(y\).
That is, \(x\) is socially preferred to \(y\), since generation \(i\) is better off in \(x\) than in \(y\).
Hammond Equity yields a weak ranking in the same setting.
The Pigou-Dalton transfer principle, shows preference for a non-leaky and non-rank-switching (precise definition in Section 2) redistribution of a good from a \emph{rich} person to a \emph{poor} person, with no one else being affected.
In this transfer, the gain to the poor person exactly equals the loss suffered by the rich person.
Further the transfer is non-rank-switching in the sense that the poor person does not end up having more than the rich person.

It is well known that SWOs satisfying consequentialist equity axioms may exist, a result proven relying on non-constructive methods.
For Hammond Equity, \citet{bossert2007} have shown that there exist SWOs on infinite utility streams satisfying Hammond Equity and Strong Pareto.
It is also known that an SWO satisfying PD exists; see \citet[Theorem 1]{bossert2007}.
However, these existence results use the variant of Szpilrajn's Lemma given in \citet{arrow1951}, which is a non-constructive device.
It is pertinent to note that if the SWO is shown to exist using some form of AC then it remains an open question whether the SWO admits an explicit construction.
An affirmative answer would at least enable the social planner to rank pairs of infinite utility streams and would be useful to policy makers.

\citet{fleurbaey2003} explored the necessity of use of such non-constructive techniques and conjectured that it may not be possible to explicitly describe the SWOs that have been shown to exist using AC.
The conjecture, if found to be true, could have important consequences since the SWO would then be of limited practical use in policy making.
\citet{lauwers2010} and \citet{zame2007} using alternative techniques proved the necessity of reliance on some non-constructive device to establish the existence result, thereby confirming the conjecture.
\citet{lauwers2010} relied upon the existence of non-Ramsey sets to establish the non-constructive nature of the SWOs. 
We rely on the existence of non-Ramsey sets to establish non-constructive nature of SWOs.

Axiom of Choice has also been relied upon to prove the existence of representable SWOs.
In two recent contributions, \citet[Proposition 5]{alcantud2010a} and \citet[Proposition 3]{sakamoto2012} have proved the existence of a representable SWO satisfying PD, when the domain set is $Y = [0, 1]$.
The techniques used by them are similar to that introduced by \citet[Proposition 1]{basu2007}, who show the existence of a representable SWO satisfying the anonymity and an efficiency principle known as \emph{weak dominance}. 
Since, these existence results use AC it could turn out to be the case that the representations are non-constructive.


In this paper, we examine the constructive features of SWOs and representable SWOs that have been shown to exist in the literature.
Our results are summarized as follows.

Our first contribution is a positive construction result.
We show that if \(Y\) is well-ordered by the usual order, then there exists an explicitly defined SWO on \(X=Y^{\N}\) satisfying Strong Pareto and Strong Equity.
Since Strong Equity implies Hammond Equity, and since the present strict formulation of PD is covered by Strong Equity together with the additional conservation requirement, the constructed SWO also satisfies Hammond Equity and PD (see Remark \ref{R1}).

The proof constructs an explicitly defined social welfare ordering by means of a lexicographic comparison. 
Crucially, the lexicographic ordering is not imposed directly on generations. 
Rather, utility streams are mapped into binary threshold-utility codes, and the resulting codes are ordered lexicographically. 
This distinction is essential: it allows the ordering to preserve the desired equity-sensitive comparisons while retaining completeness, transitivity, and Strong Pareto efficiency.
Thus, on well-ordered domains, the consequentialist equity axioms can be jointly satisfied with Strong Pareto by an explicitly specified social welfare ordering.

The second contribution concerns nonconstructive implications.
We show that if \(Y\) contains a subset of order type \(\omega^\ast\),%
\footnote{The set of negative integers ordered by the usual order is an easy example.} and if there exists an SWO on \(X\) satisfying Strong Equity and Monotonicity, then there exists a non-Ramsey collection of infinite subsets of \(\N\).
We also prove an analogous result for PD together with Monotonicity under a constant-gap descending-block condition on the domain.
Thus, for sufficiently rich descending domains, the existence of an SWO satisfying these axioms entails the existence of a nonconstructive object.

The third contribution concerns representation and construction for Strong Equity alone.
We first show that if \(Y\) is countable, then there exists an SWF \(W:X\to\R\) satisfying Strong Equity.
The proof uses an extension argument on eventual-equivalence classes and is therefore non-explicit in general.
We then show that, when \(Y=[0,1]\), no real-valued SWF satisfying Strong Equity exists.
This shows that there exist non-trivial conflict between any real-valued representation and  Strong Equity condition, even without adding any efficiency axiom.

Finally, we provide explicit SWO constructions for Strong Equity alone under order-theoretic restrictions on the domain.
Let \(L(Y)\) be the set of utility levels that can occur as the pre-transfer poor generation in a Strong Equity comparison, and let \(U(Y)\) be the set of utility levels that can occur as the pre-transfer rich generation.
If \(L(Y)\) is well-ordered by the usual order, or if \(U(Y)\) is reverse well-ordered, then there exists an explicitly defined SWO on \(X\) satisfying Strong Equity.
On the negative side, if \(Y\) contains an integer-like subset,%
\footnote{We call them as subset of order type \(\mu\) following nomenclature introduced in \citet{dubey2011}.} 
then the existence of an SWO satisfying Strong Equity implies the existence of a non-Ramsey collection.
This identifies a clear class of domains on which Strong Equity rankings are nonconstructive in the sense used in this paper.

A further contribution is to clarify the scope of earlier correspondence claims relating real-valued representation, explicit SWO construction.
The results obtained here show that the domain on which an SWF exists need not coincide with the domain on which an explicitly constructible SWO exists.
Thus the relationship between representation and construction is axiom-specific and depends on the order-theoretic and, in the case of PD, additive structure of the utility domain.

The rest of the paper is organized as follows.
Section \ref{sec2} introduces the notation, the equity and efficiency axioms, the relevant order types, and Ramsey and non-Ramsey collections.
Section \ref{sec3} gives explicit constructions for well-ordered domains and proves non-constructiveness results for Strong Equity and PD under Monotonicity.
Section \ref{sec4} studies representation and construction for Strong Equity alone, including positive representation results for countable domains, a non-representability result on the unit interval, and a non-Ramsey obstruction on integer-like domains.
Section \ref{sec5} revisits the correspondence claims in \citet{dubey2014}, \citet{dubey2014b}, \citet{dubey2016}, and \citet{dubey2016a} in light of the present results.

\section{Preliminaries}\label{sec2}

\subsection{Notation}

Let $\R$, $\Z$, and $\N$ denote the sets of real numbers, integers, and natural numbers, respectively; throughout, $\N=\{1,2,\ldots\}$.
For all $y, z\,\in \R^{\N}$, we write $y\geq z$ if $y_{n}\geq z_{n}$, for all $n\in \N$; we write $y>z$ if $y\geq z$ and $y\neq z$; and we write $y\gg z$ if $y_{n}>z_{n}$ for all $n\in \N$.

\subsection{Definitions}

Let $Y$, a non-empty subset of $\R$, be the set of all possible utilities that any generation can achieve. 
Then $X\equiv Y^{\N}$ is the set of all possible utility streams. 
If $x \;\in X$, then $x = (x_{1}, x_{2},\cdots)$, where, for all $n\in \N$, $x_{n}\in Y$.
We consider binary relations on $X$, denoted by $\succsim$, with symmetric and asymmetric parts denoted by $\sim $ and $\succ$ respectively, defined in the usual way. 
A \emph{social welfare order} (SWO) is a complete and transitive binary relation.
A \emph{social welfare function} (SWF) is a mapping $W:X\rightarrow \mathbb{R}$. 
Given a SWO $\succsim$ on $X$, we say that $\succsim$ can be \emph{represented} by a real-valued function if there is a mapping $W: X\rightarrow \mathbb{R}$ such that for all $x,y\in X$, we have $x\succsim y$ if and only if $W(x)\geq W(y)$.
We say that an SWF $W$ satisfies an axiom stated for SWOs if the induced relation $x\succeq_W y$ defined by $W(x)\geq W(y)$ satisfies that axiom.

\subsubsection{Equity and Efficiency axioms}

The following consequentialist equity and efficiency axioms on social welfare orders are used in this paper. 

\begin{definition}[Monotonicity (M)]
\label{D1}
\emph{An SWO $\succeq$ on $X$ satisfies monotonicity if, for all $x,y\in X$, $x\geq y$ then $x\succeq y$.}
\end{definition}

\begin{definition}[Strong Pareto (SP)]
\label{D2}
An SWO $\succeq$ on $X$ satisfies Strong Pareto if, for all $x,y\in X$, $x>y$ then $x\succ y$.
\end{definition}

\begin{definition}[Weak Pareto (WP)]
\label{D3}
An SWO $\succeq$ on $X$ satisfies Weak Pareto if, for all $x,y\in X$, $x\gg y$ then $ x\succ y$.
\end{definition}

\begin{definition}[Strong Equity (SE)]
\label{D4}
\emph{An SWO $\succeq$ on $X$ satisfies Strong Equity if, for all $x,y\in X$, whenever there exist $i,j\in\N$ such that $y_j>x_j>x_i>y_i$ and $x_k=y_k$ for all $k\in\N\setminus\{i,j\}$, then $x\succ y$.}
\end{definition}

\begin{definition}[Hammond Equity (HE)]
\label{D5}
\emph{An SWO $\succeq$ on $X$ satisfies Hammond Equity if, for all $x,y\in X$, whenever there exist $i,j\in\N$ such that $y_j>x_j>x_i>y_i$ and $x_k=y_k$ for all $k\in\N\setminus\{i,j\}$, then $x\succeq y$.}%
\footnote{
It is useful to note that there exists a real-valued SWF satisfying HE.
For instance, for all $x\in X$, let $W(x)= \liminf{x_t}$. 
Then $W$ is monotone and satisfies HE. 
In addition, it also satisfies a very weak efficiency condition, namely Uniform Improvement Pareto (UIP): for all $x, y\in X$, if $x_t>y_t+\varepsilon$ for some $\varepsilon>0$ and all $t\in \N$, then $W(x)>W(y)$.
Thus there exists a monotone, explicitly defined SWF satisfying HE and UIP.}\label{FN1} 

\end{definition}

\begin{definition}[Pigou--Dalton transfer principle (PD)]
\label{D6}
\emph{An SWO $\succeq$ on $X$ satisfies the Pigou-Dalton transfer principle if, for all $x,y\in X$, whenever there exist $i,j\in\N$ such that $y_j>x_j>x_i>y_i$, $y_j+y_i=x_j+x_i$, and $x_k=y_k$ for all $k\in\N\setminus\{i,j\}$ then $x\succ y$.}
\end{definition}

\subsection{Domain Types}

In this subsection, we recall standard terminology concerning order types.
A set $S$ is \emph{strictly ordered} by a binary relation $<$ if $<$ is connected, transitive, and irreflexive: for all $s, s^{\prime}\in S$ with $s \neq s^{\prime}$, either $s<s^{\prime}$ or $s^{\prime}<s$ holds; if $s<s^{\prime}$ and $s^{\prime}<s^{\prime\prime}$, then $s<s^{\prime\prime}$; and $s<s$ holds for no $s\in S$.
We denote such a strictly ordered set by $S(<)$.

Two strictly ordered sets $S(<)$ and $S^{\prime}(\prec)$ are \emph{similar} if there exists a one-to-one map $f:S\to S^{\prime}$ onto $S^{\prime}$ such that, for all $s_1,s_2\in S$, $s_1<s_2$ if and only if $f(s_1)\prec f(s_2)$.

Let $Y$ be a non-empty subset of $\R$.
Let $\N^{-}={-1,-2,-3,\ldots}$, ordered by the usual order.
We use the following order types:
\begin{itemize}
\item $Y(<)$ is of order type $\omega$ if $Y(<)$ is similar to $\N(<)$.

\item $Y(<)$ is of order type $\omega^{\ast}$ if $Y(<)$ is similar to $\N^{-}(<)$.

\item $Y(<)$ is of order type $\sigma$ if $Y(<)$ is similar to $\I(<)$.

\item $Y(<)$ is of order type $\mu$ if $Y$ contains a non-empty subset $Y^{\prime}$ such that $Y^{\prime}(<)$ is of order type $\sigma$.
\end{itemize}
Thus, a set of order type $\mu$ contains an integer-like subset.

\subsection{Ramsey and Non-Ramsey Collections of Sets}

Let $T$ be an infinite subset of $\N$.
We denote by $\calS(T)$ the collection of all infinite subsets of $T$.
We denote $\calS(\N)$ by $\calS$.

\begin{definition}[Non-Ramsey collection]
\label{D7}
\emph{A collection $\OmegaC\subseteq\calS$ is Ramsey if there exists $T\in\calS$ such that either $\calS(T)\subseteq\OmegaC$ or $\calS(T)\subseteq\calS\setminus\OmegaC$.
The collection $\OmegaC$ is non-Ramsey if, for every $T\in\calS$, the collection $\calS(T)$ intersects both $\OmegaC$ and $\calS\setminus\OmegaC$.}
\end{definition}

\section{Monotone SWO satisfying Consequentialist Equity} \label{sec3}
In this section we examine the constructive status of social welfare orders satisfying SE, HE, or PD together with efficiency requirements. 
The results are mixed. 
For some domains ($Y\subseteq\mathbb R$), such SWOs admit explicit construction; for others, their existence entails the existence of a non-Ramsey collection. 
The order type of subsets of $Y$ plays the critical role.

Since our results cover both constructive and nonconstructive conclusions, the choice of efficiency axiom requires some explanation. 
In the constructive theorem, we impose the strongest efficiency requirement, Strong Pareto. 
This is not an additional burden on the construction: the threshold utility-lexicographic order constructed below satisfies Strong Pareto, and hence the positive result has maximal strength with respect to the efficiency hierarchy. 
By contrast, in the nonconstructive results we use the weakest efficiency requirement that still drives the obstruction. 
For SE and PD, the efficiency condition turns out to be monotonicity. 
For HE, monotonicity alone is too weak, since universal indifference satisfies HE  and monotonicity; hence the negative result in case of HE requires a stronger Pareto-type condition.

\subsection{Explicit Description of SWO satisfying Strong Equity and Strong Pareto}

We show that if \(Y\) is well-ordered by the usual order, then there exists an explicitly described social welfare order on \(X=Y^{\mathbb N}\) satisfying SP, SE, HE and PD.
The construction is lexicographic, not over generations directly, but over binary threshold utility codes.

\begin{theorem}
\label{T1}
\emph{Let \(Y\subseteq\mathbb R\) be non-empty and well-ordered by the usual order \(<\), and let \(X\equiv Y^{\mathbb N}\). 
Then there exists an explicitly defined SWO on \(X\) satisfying Strong Pareto and Strong Equity.}
\end{theorem}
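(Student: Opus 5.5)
The plan is to compare utility streams lexicographically through a family of binary codes indexed by thresholds in \(Y\), rather than by comparing generations directly. Well-orderedness of \(Y\) is what guarantees that a \emph{first} point of disagreement always exists.

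\textbf{The codes and the relation.} For \(x\in X\) and \(a\in Y\), define the threshold code \(c_a(x)\in\{0,1\}^{\N}\) by \(c_a(x)_n=1\) if \(x_n>a\) and \(c_a(x)_n=0\) otherwise. Index these bits by \(W=Y\times\N\), ordered lexicographically: \((a,n)\) precedes \((a',n')\) iff \(a<a'\), or \(a=a'\) and \(n<n'\). Since \(Y(<)\) and \(\N(<)\) are well-ordered, \(W\) is well-ordered. Define \(\Phi(x)\in\{0,1\}^W\) by \(\Phi(x)(a,n)=c_a(x)_n\).

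\textbf{\(\Phi\) is injective.} If \(x_n\neq y_n\), take \(a=\min\{x_n,y_n\}\in Y\). Then exactly one of \(x_n>a\), \(y_n>a\) holds, so \(\Phi(x)(a,n)\neq\Phi(y)(a,n)\).

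\textbf{The SWO.} For \(x\neq y\), the set of indices where \(\Phi(x)\) and \(\Phi(y)\) differ is a non-empty subset of \(W\). It therefore has a least element \((a^\ast,n^\ast)\). Set \(x\succ y\) iff \(\Phi(x)(a^\ast,n^\ast)=1\), and \(x\sim y\) iff \(x=y\).

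Completeness is immediate. For transitivity, use the standard argument for lexicographic orders on \(\{0,1\}^W\) with \(W\) well-ordered. Suppose \(\Phi(x)\) beats \(\Phi(y)\) at first difference \(w_1\), and \(\Phi(y)\) beats \(\Phi(z)\) at first difference \(w_2\). Let \(w=\min\{w_1,w_2\}\). Then \(\Phi(x)\) and \(\Phi(z)\) agree below \(w\), and at \(w\) we have \(\Phi(x)(w)=1\) and \(\Phi(z)(w)=0\). This checks out in each case \(w_1<w_2\), \(w_1>w_2\), \(w_1=w_2\). Everything is explicit: no choice is used, only least elements of subsets of a well-ordered set.

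\textbf{Strong Pareto.} Suppose \(x>y\). Then \(x_n\ge y_n\) for every \(n\), so \(c_a(x)_n\ge c_a(y)_n\) for all \(a\) and \(n\). Hence at the first difference \(\Phi(x)\) has a \(1\), giving \(x\succ y\).

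\textbf{Strong Equity.} Suppose \(y_j>x_j>x_i>y_i\) and \(x_k=y_k\) for all \(k\neq i,j\).
\begin{itemize}
\item Codes differ in coordinate \(i\) exactly for thresholds \(a\in Y\) with \(y_i\le a<x_i\), and there \(x\) carries the \(1\).
\item Codes differ in coordinate \(j\) exactly for thresholds \(a\) with \(x_j\le a<y_j\), and there \(y\) carries the \(1\).
\item Codes agree in every other coordinate.
\end{itemize}
The least differing threshold is therefore \(a^\ast=y_i\in Y\). Since \(y_i<x_j\), only coordinate \(i\) differs at this threshold, so \(n^\ast=i\) and \(\Phi(x)(a^\ast,i)=1\). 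Hence \(x\succ y\).

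The only delicate point is the existence of the least differing index, which is exactly where well-orderedness of \(Y\) is needed. The key design choice is to put thresholds before generations in the lexicographic order, so that the poorest affected utility level \(y_i\) is decisive.
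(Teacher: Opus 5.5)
Your proposal is correct and is essentially the paper's proof. The paper also orders threshold–generation pairs lexicographically (threshold first) and ranks streams by the first differing code; its code $B_{t,n}(x)=1$ iff $x_n\le t$, with $0$ ranked above $1$, is the complement of your $c_a(x)_n$ with $1$ ranked above $0$, so the two orders coincide. The paper drops $\max Y$ from the threshold set, which is cosmetic since that threshold never produces a code difference. Your transitivity check is more explicit than the paper's, and the case $w_1=w_2$ in it is actually vacuous, because $\Phi(y)(w_1)=0\neq 1=\Phi(y)(w_2)$.
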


\begin{proof}
Let $\bar{Y}\equiv \{t\in Y: \text{there exists } s\in Y \text{ with } t<s\}$.
Observe that since $Y$ is well-ordered by $<$, the set $\bar{Y}$ is also well-ordered by $<$.
For $x\in X$, $t\in \bar{Y}$, and $n\in\N$, define
\[
B_{t,n}(x)=
\begin{cases}
1, & x_n\leq t,\\
0, & x_n>t.
\end{cases}
\]
We define a lexicographic order on $\bar{Y}\times\N$ by comparing $t$ (call it threshold utility) first and $n$ (generation) second.
In other words, $(t,n)<(s,m)$ if either $t<s$, or $t=s$ and $n<m$.
Thus the index set $\bar{Y}\times\N$ is well-ordered in this lexicographic order.
For $x,y\in X$, define $x\succeq^{\ast} y$ as follows.
If $B_{t,n}(x)=B_{t,n}(y)$ for every $(t,n)\in \bar{Y}\times\N$, set $x\sim^{\ast} y$.
If the two codes differ, let $(t^{\ast},n^{\ast})$ be the first pair at which they differ.
The strict ranking is defined as follows: 
\[
x\succ^{\ast} y, \;\text{if and only if}\; B_{t^{\ast},n^{\ast}}(x)=0 \; \text{and} \; B_{t^{\ast},n^{\ast}}(y)=1.
\]
Thus the lexicographic order on $\{B_{t,n}(x)\}_{(t,n)\in \bar{Y}\times\N}$, ranks $0$ above $1$.

\noindent\textbf{Completeness.}
If all threshold utility codes of $x$ and $y$ agree, then $x=y$.
If $x\neq y$, then there is a minimal $n\in \N$ such that $x_n\neq y_n$.
If $x_n>y_n$, then $y_n\in \bar{Y}$ and
\[
B_{y_n,n}(x)=0<1=B_{y_n,n}(y).
\]
The case $y_n>x_n$ is symmetric.

\noindent\textbf{Transitivity.}
It follows from transitivity of the lexicographic binary relation.

Hence, $\succeq^{\ast}$ is an SWO on $X$.

\noindent We show next that $\geq_{\lex}$ satisfies Strong Pareto and Strong Equity axioms.

\noindent\textbf{Strong Pareto.}
Let $x>y$.
For every $t\in \bar{Y}$ and $n\in\N$, $x_n\geq y_n$ implies
\[
B_{t,n}(x)\leq B_{t,n}(y).
\]
Since $x\neq y$, choose $m\in\N$ with $x_m>y_m$.
Since \(B_{t,n}(x)\leq B_{t,n}(y)\) for every \((t,n)\), no code difference could prefer \(y\). 
Since \(B_{y_m,m}(x)=0\) and \(B_{y_m,m}(y)=1\), there is at least one code difference that prefers \(x\). 
Hence the first code difference favors \(x\), and therefore \(x\succ^\ast y\).

\noindent\textbf{Strong Equity.}
Let $x,y\in X$ and let $i,j\in\N$ satisfy $y_j>x_j>x_i>y_i$ and $x_k=y_k$ for all $k\in\N\setminus\{i,j\}$.
For every threshold utility $s<y_i$, the generations $i$ and $j$'s utility exceed $s$ in both streams, so $B_{s, i} (x) = B_{s, i}(y)$. 
At threshold utility $y_i$, 
\[
B_{y_i,i}(x)=0, \; B_{y_i,i}(y)=1, \;\text{and}\;  B_{y_i,j}(x)=B_{y_i,j}(y)=0.
\]
Since, $x_k = y_k$ for all $k\in\N\setminus\{i,j\}$, the first code difference occurs at $y_i, i$ and shows strict preference for $x$.
Hence $x\succ^{\ast} y$.
\end{proof}

\begin{remark}
\label{R1}
\emph{\begin{itemize}
\item Since Strong Equity implies Hammond Equity, the SWO constructed in Theorem \ref{T1} also satisfies Hammond Equity.
Indeed, Hammond Equity requires only weak preference in the same two-generation comparisons in which Strong Equity requires strict preference.
\item Under the present strict definition of the Pigou-Dalton transfer principle, the SWO constructed in Theorem \ref{T1} also satisfies PD.
To see this, let \(i,j\in\mathbb N\) satisfy
\[
y_j>x_j>x_i>y_i,\qquad y_j+y_i=x_j+x_i,
\]
with all other generations unchanged.
The inequalities \(y_j>x_j>x_i>y_i\) already form a Strong Equity comparison.
Equivalently, in the construction used in Theorem \ref{T1}, at the threshold \(t=y_i\), generation \(i\) changes from weakly below \(t\) in \(y\) to above \(t\) in \(x\), while generation \(j\) remains above \(t\) in both streams.
Thus the first relevant threshold utility-code difference shows strict preference for \(x\), so \(x\succ^\ast y\).
\item The order constructed above admits an explicit description, but it is lexicographic and therefore should not be confused with a real-valued social welfare function representation.
Existing impossibility results show that, on many nontrivial domains, equity requirements together with Pareto requirements rule out real-valued representation.
The lexicographic order above should therefore be read as an explicit SWO construction, not as a real-valued SWF representation.
\end{itemize}}
\end{remark}

\begin{remark}\label{R2}
\emph{\begin{itemize}
\item 
\citet[Proposition 1]{dubey2016} shows that, when \(Y\) is the set of negative integers, the existence of an SWO on \(Y^{\N}\) satisfying HE and WP entails the existence of a non-Ramsey set. 
The same conclusion holds whenever \(Y\) contains a subset \(Z\) of order type \(\omega^{\ast}\). 
Indeed, any SWO on \(Y^{\N}\) restricts to an SWO on \(Z^{\N}\), and \(Z\) can be relabelled in an order-preserving way as the negative integers. Since HE and WP are ordinal conditions, such a relabelling preserves both axioms. Hence Dubey's negative-integer argument applies to every \(\omega^{\ast}\)-subdomain.
\item 
These observations give a fairly complete analysis of the Pareto variants considered here for SWOs satisfying HE. 
No SWF satisfying HE and SP exists on any domain \(Y\) containing more than three distinct elements. 
SWOs satisfying HE and SP admit an explicit description when \(Y\) is well-ordered. 
On the other hand, if \(Y\) contains a subset of order type \(\omega^{\ast}\), then any SWO satisfying HE and WP is non-constructive. 
Weakening the Pareto requirement further to Uniform Improvement Pareto allows one to obtain an SWF satisfying HE (see footnote \ref{FN1}).
\end{itemize}}
\end{remark}

\subsection{SWO implies existence of Non-Ramsey set}
We discuss the non-constructive nature of SWOs satisfying  SE and PD in the following theorem.

\begin{theorem}
\label{T2}
\emph{Let \(Y\subseteq\R\) be non-empty and let \(X\equiv Y^\N\).
The following statements hold.
\begin{enumerate}[label=\textup{(\alph*)}]
\item 
If \(Y\) contains a subset of order type \(\omega^{\ast}\), and if there exists an SWO on \(X\) satisfying Strong Equity and monotonicity, then there exists a non-Ramsey collection \(\OmegaC\subseteq\calS\).
\item 
Suppose that there exist sequences \((\alpha_m)_{m\in\N}\) and \((\beta_m)_{m\in\N}\) in \(Y\) and a number \(\delta>0\) such that
\[
\alpha_m<\beta_m,\;  \beta_m-\alpha_m=\delta
\;\forall m\in\N,\; \text{and for}\; m>l, 
\alpha_m<\beta_m<\alpha_\ell<\beta_\ell.
\]
If there exists an SWO on \(X\) satisfying the Pigou--Dalton transfer principle and monotonicity, then there exists a non-Ramsey collection \(\OmegaC\subseteq\calS\).
\end{enumerate}}
\end{theorem}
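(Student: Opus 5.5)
The plan is to adapt the Lauwers-type argument, as used in \citet{lauwers2010} and \citet[Proposition 1]{dubey2016}, to a setting where the only efficiency axiom available is monotonicity. Each infinite set $S\in\calS$ is coded as a utility stream $x(S)\in X$, and $\OmegaC$ is defined by comparing the codes of two sets built from $S$. A homogeneous $T$, meaning one with $\calS(T)\subseteq\OmegaC$ or $\calS(T)\cap\OmegaC=\emptyset$, will then be shown to produce a cycle in the strict part of the SWO, which contradicts transitivity.

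For (a), fix a strictly decreasing sequence $a_1>a_2>a_3>\cdots$ in $Y$; it exists because $Y$ has a subset of order type $\omega^{\ast}$. Write $S=\{s_1<s_2<\cdots\}$ and define $x(S)_n=a_{k}$, where $k=|S\cap\{1,\dots,n\}|+1$. Thus $x(S)$ is a nonincreasing step stream that drops one level each time $n$ passes an element of $S$.

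\begin{itemize}
\item \emph{Definition of $\OmegaC$.} From $S$ form $S^{o}=\{s_1,s_3,s_5,\dots\}$ and $S^{e}=\{s_2,s_4,\dots\}$, and set $\OmegaC=\{S\in\calS: x(S^{o})\succ x(S^{e})\}$.
\item \emph{Pointwise comparisons.} For sets $A,B$ whose elements interleave, $x(A)$ and $x(B)$ differ on blocks of generations. On each block one stream sits one $a$-level above the other. Where $B$ runs ahead of $A$, i.e.\ its $k$-th element comes earlier, this gives $x(A)\geq x(B)$, so monotonicity yields $x(A)\succeq x(B)$.
\item \emph{Where SE enters.} Strictness must come from Strong Equity. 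I would modify a single pair of generations, one inside a block where $A$ is above and one inside a block where $B$ is above. This should produce an intermediate stream $z$ with $z\geq x(B)$ coordinatewise and $z\succ x(A)$ by SE, or the mirror-image configuration. Combined with the monotone comparisons, this gives the needed strict comparison between codes.
\item \emph{The cycle.} Given a homogeneous $T=\{t_1<t_2<\cdots\}$, choose three interleaved subsets $A,B,C\in\calS(T)$. They should be arranged so that each of $A\cup B$, $B\cup C$, $C\cup A$ (suitably reindexed) has its odd part and even part equal to the relevant pair. Homogeneity then forces $x(A)\succ x(B)\succ x(C)\succ x(A)$ in the case $\calS(T)\subseteq\OmegaC$. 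In the other case it forces the weak reversals, and SE makes one of them strict, contradicting the monotone comparison above.
\end{itemize}

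For (b) the same template is used with the levels taken from the blocks $[\alpha_m,\beta_m]$. Every downward step in the coding stream should be a transfer of exactly $\delta$: one generation moves from $\beta_m$ to $\alpha_m$ while a partner moves in the opposite direction. The constant gap $\beta_m-\alpha_m=\delta$ is exactly what makes $y_i+y_j=x_i+x_j$ hold in the single-pair modification, so PD can replace SE there. The block ordering $\alpha_m<\beta_m<\alpha_\ell<\beta_\ell$ for $m>\ell$ preserves the inequalities $y_j>x_j>x_i>y_i$ needed for the transfer.

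The main obstacle is the SE/PD step. Two codes differ at infinitely many generations, but SE and PD change only two at a time, and monotonicity gives only weak preference. The coding therefore has to be arranged so that every comparison except one pair is a coordinatewise dominance. I would first try making $x(S)$ take two values per block, so that swapping the roles of $A$ and $B$ at a single element of $T$ is exactly one SE (resp.\ PD) move. If that fails, I would instead define $\OmegaC$ through $x(S)$ versus $x(S\setminus\{s_1\})$, which differ by a shift that is closer to a single transfer.
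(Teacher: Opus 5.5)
Your concrete construction collapses, so there is a genuine gap. Because $s_{2k-1}<s_{2k}$, the set $S^{o}$ runs ahead of $S^{e}$. Hence $|S^{o}\cap\{1,\dots,n\}|\geq|S^{e}\cap\{1,\dots,n\}|$ for every $n$, and since $(a_k)$ is decreasing, $x(S^{e})\geq x(S^{o})$ coordinatewise for \emph{every} $S\in\calS$. By your own pointwise-comparison step and monotonicity, $x(S^{e})\succeq x(S^{o})$. So $x(S^{o})\succ x(S^{e})$ never holds, and $\OmegaC=\emptyset$, which is trivially Ramsey. Neither homogeneous case yields a contradiction. The case $\calS(T)\subseteq\OmegaC$ never occurs, and $\calS(T)\cap\OmegaC=\emptyset$ is exactly what monotonicity predicts, so there is nothing for an SE move to contradict. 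The three-set cycle is also impossible as described: requiring $A\cup B$, $B\cup C$, $C\cup A$ to have odd/even parts $(A,B)$, $(B,C)$, $(C,A)$ forces $\min A<\min B<\min C<\min A$. Your fallback, comparing $x(S)$ with $x(S\setminus\{s_1\})$, fails for the same reason, since $x(S\setminus\{s_1\})\geq x(S)$.

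What is missing is a coding under which the two compared streams are coordinatewise incomparable. The part of the difference that goes against the desired ranking must be finite, so that SE (or PD) moves can absorb it, while the infinite remainder goes the right way for monotonicity. The paper sets $\Phi(A)$ equal to $\beta_m$ on a block $C_m$ with $|C_m|=2^{m-1}$ if $m\in A$, and to $\alpha_m$ otherwise. Its key lemma says: if $A\setminus B$ is finite and non-empty and some $m\in B\setminus A$ exceeds $\max(A\setminus B)$, then $\Phi(B)\succ\Phi(A)$. Since $\sum_{\ell<m}|C_\ell|<|C_m|$, every coordinate of the lost blocks can be paired injectively with a coordinate of $C_m$. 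Each pair is one SE move because $\beta_\ell>\alpha_\ell>\beta_m>\alpha_m$, and one PD move under the equal gap $\delta$; monotonicity then finishes. A single pair modification, as in your plan, cannot absorb a loss spread over an arbitrarily long finite interval. With $\OmegaC=\{S:\Phi(S_1)\succ\Phi(S_2)\}$ for the disjoint alternating interval unions $S_1,S_2$, one shows directly that each $S$ has an infinite subset of the opposite type. Deleting $s_1$ gives $T_2=S_1$ and $T_1=S_2\setminus[s_1,s_2)$, which monotonicity alone handles. Deleting $s_1$ and all $s_{4n},s_{4n+1}$ gives $T_2\subseteq S_1$, and a $T_1$ that loses only $[s_1,s_2)$ from $S_2$ while gaining unboundedly many later intervals, which the lemma handles.
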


The proof of the theorem will be based on a series of lemmas.
For part \textup{(a)}, if $Y$ contains a subset of order type $\omega^{\ast}$, choose a strictly decreasing sequence $c_1>c_2>\cdots$ in $Y$ and set $\beta_m=c_{2m-1}$ and $\alpha_m=c_{2m}$. Then $\alpha_m<\beta_m$ for every $m\in\N$, and for $m>\ell$, $\alpha_m<\beta_m<\alpha_\ell<\beta_\ell$. For part \textup{(b)}, use the sequences assumed in the statement of the theorem.
Throughout the following lemmas, let $\{\alpha_m,\beta_m\}_{m\in\N}$ be a sequence of pairs in $Y$ such that $\alpha_m<\beta_m \; \text{for every } m\in\N$ and, for  $m>\ell$, $\alpha_m<\beta_m<\alpha_\ell<\beta_\ell$.
We partition $\N$ into consecutive finite blocks $C_m$ with $|C_m|=2^{m-1}$, where $|\cdot|$ denotes the cardinality of the set.
For $A\subseteq\N$, define the sequence $\Phi(A)\in Y^\N$ by %
\footnote{For example, we  take $C_m=\{2^{m-1},2^{m-1}+1, \cdots, 2^m-1\}$, and $\alpha_m=-2m-1$, $\beta_m=-2m$.
Then \(\beta_m-\alpha_m=1\) for every \(m\), and
\(\alpha_m<\beta_m<\alpha_\ell<\beta_\ell\) whenever \(m>\ell\).
Thus, for \(A=\{1,3,5,\cdots\}\), we get $\Phi(A)=(-2,-5,-5,-6,-6,-6,-6,-9,\ldots,-9,-10,\cdots)$, where each block \(C_m\) is filled with \(\beta_m\) if \(m\in A\), and with \(\alpha_m\) otherwise.}
\[
\Phi(A)_n=
\begin{cases}
\beta_m, & n\in C_m \text{ and } m\in A,\\
\alpha_m, & n\in C_m \text{ and } m\notin A.
\end{cases}
\]
For an infinite set $S=\{s_1<s_2<s_3<\cdots\}\in\calS$, define
\[
S_1=[s_2,s_3)\cup [s_4,s_5)\cup [s_6,s_7)\cup\cdots
\]
and
\[
S_2=[s_1,s_2)\cup [s_3,s_4)\cup [s_5,s_6)\cup\cdots,
\]
where $[r,s)=\{n\in\N:r\leq n<s\}$.

\begin{lemma}
\label{L1}
Let $A,B\subseteq\N$ and $A\subseteq B$.
Then $\Phi(B)\geq\Phi(A)$.
If $\succeq$ is monotone then,  $\Phi(B)\succeq\Phi(A)$.
\end{lemma}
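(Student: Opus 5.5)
The plan is to check the inequality one coordinate at a time, then get the second claim directly from Monotonicity (Definition \ref{D1}). The blocks $C_m$ partition $\N$, so fix $n\in\N$ and let $m$ be the unique index with $n\in C_m$. By the definition of $\Phi$, $\Phi(A)_n$ and $\Phi(B)_n$ depend only on whether $m$ belongs to $A$ and to $B$. Since $A\subseteq B$, there are only three cases.

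\emph{Case 1: $m\in A$.} Then $m\in B$, so $\Phi(B)_n=\beta_m=\Phi(A)_n$.

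\emph{Case 2: $m\notin A$ and $m\in B$.} Then $\Phi(B)_n=\beta_m>\alpha_m=\Phi(A)_n$, using $\alpha_m<\beta_m$.

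\emph{Case 3: $m\notin B$.} Then $m\notin A$, so $\Phi(B)_n=\alpha_m=\Phi(A)_n$.

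In every case $\Phi(B)_n\ge\Phi(A)_n$. As $n$ was arbitrary, $\Phi(B)\ge\Phi(A)$ in the coordinatewise sense of the notation subsection. Both $\Phi(A)$ and $\Phi(B)$ lie in $X$, because $\alpha_m,\beta_m\in Y$.

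For the second claim, suppose $\succeq$ is monotone. Definition \ref{D1} applied to $x=\Phi(B)$ and $y=\Phi(A)$ gives $\Phi(B)\succeq\Phi(A)$ at once.

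Nothing here is really an obstacle. The one point to keep straight is that the partition into blocks makes each coordinate of $\Phi(\cdot)$ depend on membership of a single index $m$. Only the within-pair inequality $\alpha_m<\beta_m$ is used. The cross-block ordering $\alpha_m<\beta_m<\alpha_\ell<\beta_\ell$ for $m>\ell$ and the constant gap $\delta$ play no role in this lemma; they are needed only in the later lemmas.
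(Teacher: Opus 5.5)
Your proof is correct and matches the paper's argument: the paper also splits on whether the block index $m$ lies in $A$, in $B\setminus A$, or outside $B$. It then uses only $\beta_m>\alpha_m$ to get $\Phi(B)\geq\Phi(A)$ coordinatewise, and concludes $\Phi(B)\succeq\Phi(A)$ directly from monotonicity.
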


\begin{proof}
If $m\notin B$, then $m\notin A$, so both streams assign $\alpha_m$ to every coordinate in $C_m$.
If $m\in A$, then $m\in B$, so both streams assign $\beta_m$ to every coordinate in $C_m$.
If $m\in B\setminus A$, then $\Phi(B)$ assigns $\beta_m$ and $\Phi(A)$ assigns $\alpha_m$ to every coordinate in $C_m$.
Since $\beta_m>\alpha_m$, we have $\Phi(B)\geq\Phi(A)$. Hence, if $\succeq$ is monotone, then $\Phi(B)\succeq\Phi(A)$.
\end{proof}

\begin{lemma}
\label{L2}
Let  SWO $\succeq$ on $Y^\N$ satisfy SE and M.
Let $A,B\subseteq\N$ be such that $A\setminus B$ is finite and non-empty.
If there exists $m\in B\setminus A$ such that $m>\max(A\setminus B)$, then $x = \Phi(B) \succ \Phi(A) = y$.
\end{lemma}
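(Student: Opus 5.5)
The plan is to connect $\Phi(A)$ to $\Phi(B)$ through a finite chain of streams, each step justified by monotonicity (weak) or by Strong Equity (strict), with at least one strict step. Write $D=A\setminus B=\{d_1<\cdots<d_k\}$, which is finite and non-empty by hypothesis, and fix $m\in B\setminus A$ with $m>d_k$. Put $A'=A\cap B$, so that $A=A'\cup D$ with $A'\cap D=\emptyset$, and set $z=\Phi(A'\cup\{m\})$. Since $A'\cup\{m\}\subseteq B$, Lemma \ref{L1} together with monotonicity gives $\Phi(B)\succeq z$. It therefore suffices to show $z\succ\Phi(A)$.

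Next I would compare $z$ with $y=\Phi(A)$ block by block. On every $C_\ell$ with $\ell\notin D\cup\{m\}$ the two streams coincide. On $C_m$, $z$ takes the value $\beta_m$ while $y$ takes $\alpha_m$. On each $C_d$ with $d\in D$, $z$ takes $\alpha_d$ while $y$ takes $\beta_d$.

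The key counting step uses $|C_\ell|=2^{\ell-1}$:
\[
\sum_{d\in D}|C_d|\le\sum_{\ell=1}^{m-1}2^{\ell-1}=2^{m-1}-1<|C_m|.
\]
Hence there is an injection $\pi$ from $P=\bigcup_{d\in D}C_d$ into $C_m$, and $C_m\setminus\pi(P)$ is non-empty. Every $d\in D$ satisfies $d<m$, so the ordering hypothesis on the pairs gives
\[
\alpha_m<\beta_m<\alpha_d<\beta_d .
\]
This is exactly the pattern needed for an SE comparison with $i=\pi(j)\in C_m$ and $j\in C_d$.

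The chain then runs as follows.
\begin{enumerate}
\item Starting from $y$, raise every coordinate in $C_m\setminus\pi(P)$ from $\alpha_m$ to $\beta_m$. Call the result $y^0$; it satisfies $y^0\geq y$, so $y^0\succeq y$ by monotonicity.
\item Enumerate $P=\{j_1,\dots,j_r\}$, where $r\geq1$ because $D\neq\emptyset$. Let $y^{s}$ be obtained from $y^{s-1}$ by changing coordinate $\pi(j_s)$ from $\alpha_m$ to $\beta_m$ and coordinate $j_s$ (in block $C_d$) from $\beta_d$ to $\alpha_d$, leaving all other coordinates unchanged.
\item At each step $y^{s-1}_{j_s}=\beta_d>y^{s}_{j_s}=\alpha_d>y^{s}_{\pi(j_s)}=\beta_m>y^{s-1}_{\pi(j_s)}=\alpha_m$. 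All these values lie in $Y$, so every intermediate stream is in $X$, and Strong Equity gives $y^{s}\succ y^{s-1}$.
\item Since $\pi$ is injective, each coordinate is modified exactly once, and so $y^r=z$.
\end{enumerate}
Transitivity then yields $z\succ y^0\succeq y$, hence $z\succ\Phi(A)$, and combining with $\Phi(B)\succeq z$ gives $\Phi(B)\succ\Phi(A)$.

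The main point needing care is the counting bound, which guarantees that every "rich" coordinate in the blocks $C_d$ can be matched with its own distinct "poor" coordinate in $C_m$. This is where the doubling block sizes $|C_m|=2^{m-1}$ and the hypothesis $m>\max(A\setminus B)$ are used. The remaining verifications are routine bookkeeping: each SE step changes exactly two coordinates, and the step values never collide thanks to the strict interleaving $\alpha_m<\beta_m<\alpha_d<\beta_d$.
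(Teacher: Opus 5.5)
Your proof is correct and follows essentially the same route as the paper: an injection of $P=\bigcup_{d\in A\setminus B}C_d$ into $C_m$ via the doubling block sizes, finitely many Strong Equity swaps justified by $\alpha_m<\beta_m<\alpha_d<\beta_d$, and monotonicity to close the gap to $\Phi(B)$. The only difference is bookkeeping. You raise the unmatched coordinates of $C_m$ by monotonicity before the swaps, so the intermediate stream is exactly $\Phi((A\cap B)\cup\{m\})$, and then invoke Lemma \ref{L1}; the paper instead performs the swaps first and applies monotonicity once, directly to the resulting stream.
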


\begin{proof}
Let
\[
P=\bigcup_{\ell\in A\setminus B}C_\ell .
\]
Choose \(m\in B\setminus A\) such that \(m>\max(A\setminus B)\). Since the blocks \(C_m\) are expanding, we have
\[
|P|\leq \sum_{\ell<m}|C_\ell|<|C_m|.
\]
Hence we may choose an injective map
\[
\rho: P\to C_m.
\]
Starting from \(y=\Phi(A)\), we modify the coordinates in \(P\) as follows. 
If \(p\in C_\ell\subseteq P\), change coordinate \(p\) from \(\beta_\ell\) to \(\alpha_\ell\), and simultaneously change coordinate \(\rho(p)\in C_m\) from \(\alpha_m\) to \(\beta_m\). 
Since \(m>\ell\), we have
\[
\beta_\ell>\alpha_\ell>\beta_m>\alpha_m.
\]
Thus each two-coordinates change is strictly preferred by applying SE axiom and there are only finitely many such pairs.
We denote the utility stream obtained by performing these finitely many changes by \(z\). 
By repeated applications of SE and transitivity, we get $z\succ \Phi(A)=y$.
Next we consider  \(z\) and \(\Phi(B)=x\). 
On every coordinate in \(P\), the utility stream \(z\) has the same value as in \(\Phi(B)\), since both assign the lower value \(\alpha_\ell\) on each block \(C_\ell\) with \(\ell\in A\setminus B\). 
On every coordinate in \(\rho(P)\subseteq C_m\), the stream \(z\) also agrees with \(\Phi(B)=x\), since both assign \(\beta_m\). 
On the remaining coordinates of \(C_m\), however, \(z\) still assigns \(\alpha_m\), while \(\Phi(B)\) assigns \(\beta_m\). 
Similarly, on any other block \(C_j\) with \(j\in B\setminus A\), \(\Phi(B)\) assigns \(\beta_j\), while \(z\) assigns \(\alpha_j\). 
On all other blocks the two streams agree. 
Therefore, $x=\Phi(B)\geq z$, and by monotonicity, $x=\Phi(B)\succeq z$.
Combining this with \(z\succ \Phi(A)=y\), we obtain
\[
x=\Phi(B)\succ \Phi(A)=y.
\]
\end{proof}

\begin{lemma}
\label{L3}
\emph{Suppose there exists $\delta>0$ such that $\beta_m-\alpha_m=\delta$ for every $m\in\N$.
Let $\succeq$ be an SWO on $Y^\N$ satisfying the Pigou-Dalton transfer principle and monotonicity.
Let $A,B\subseteq\N$ with $A\setminus B$ is finite and non-empty and there exists $m\in B\setminus A$ such that $m>\max(A\setminus B)$.
Then $x=\Phi(B)\succ\Phi(A)=y$.}
\end{lemma}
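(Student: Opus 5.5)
The plan is to mimic the proof of Lemma \ref{L2}, replacing each Strong Equity step by a Pigou--Dalton transfer. The constant gap $\beta_m-\alpha_m=\delta$ is what makes each paired change sum-preserving. As in Lemma \ref{L2}, set $P=\bigcup_{\ell\in A\setminus B}C_\ell$ and fix $m\in B\setminus A$ with $m>\max(A\setminus B)$. Because the blocks satisfy $|C_\ell|=2^{\ell-1}$, we have $|P|\leq\sum_{\ell<m}|C_\ell|=2^{m-1}-1<|C_m|$. This gives an injection $\rho:P\to C_m$.

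Starting from $y=\Phi(A)$, take each $p\in P$ with $p\in C_\ell$. Simultaneously lower coordinate $p$ from $\beta_\ell$ to $\alpha_\ell$ and raise coordinate $\rho(p)$ from $\alpha_m$ to $\beta_m$. For one such step, write $u$ for the stream before and $v$ for the stream after, with $j=p$ the rich generation and $i=\rho(p)$ the poor one. We need the conditions of Definition \ref{D6} with $v$ preferred to $u$:
\[
u_j=\beta_\ell> v_j=\alpha_\ell> v_i=\beta_m> u_i=\alpha_m,
\]
where the middle inequality uses $m>\ell$ and the ordering $\alpha_m<\beta_m<\alpha_\ell<\beta_\ell$. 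The sums also agree:
\[
u_j+u_i=\beta_\ell+\alpha_m=(\alpha_\ell+\delta)+(\beta_m-\delta)=v_j+v_i .
\]
All other coordinates are unchanged. Hence PD gives $v\succ u$. Since $\rho$ is injective and $P\cap C_m=\emptyset$, the coordinates touched by distinct steps are distinct. So each step is performed on the stream produced so far, and the relevant values are still $\beta_\ell$ at $p$ and $\alpha_m$ at $\rho(p)$. There are finitely many steps, so transitivity yields $z\succ\Phi(A)$ for the final stream $z$.

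The remaining comparison is between $z$ and $\Phi(B)$, and it repeats Lemma \ref{L2} verbatim. On $P$ both streams take the value $\alpha_\ell$. On $\rho(P)$ both take $\beta_m$. On $C_m\setminus\rho(P)$ and on each $C_j$ with $j\in B\setminus A$, $j\neq m$, we have $\Phi(B)=\beta_j>\alpha_j=z$. Elsewhere the two streams agree. Thus $\Phi(B)\geq z$, monotonicity gives $\Phi(B)\succeq z$, and combined with $z\succ\Phi(A)$ we obtain $\Phi(B)\succ\Phi(A)$.

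The only genuinely new point, and the step to check carefully, is the PD verification. One must confirm both that the transfer is non-rank-switching, $\alpha_\ell>\beta_m$, and that it conserves the sum, which holds because $\beta_\ell-\alpha_\ell=\beta_m-\alpha_m=\delta$. Both follow directly from the hypotheses.
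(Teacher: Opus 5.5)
Your proposal is correct and follows essentially the same route as the paper's proof. Both use the injection $\rho:P\to C_m$ and perform finitely many paired Pigou--Dalton transfers, checking non-rank-switching via $\alpha_\ell>\beta_m$ and conservation via the common gap $\delta$; both then close the gap to $\Phi(B)$ with $\Phi(B)\geq z$ and monotonicity. Your remark that injectivity of $\rho$ and $P\cap C_m=\emptyset$ keep the successive steps from interfering is a detail the paper leaves implicit.
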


\begin{proof}
Let $P$ be as in Lemma \ref{L2}.
The same cardinality calculation gives an injection $\rho:P\to C_m$.
Start from $\Phi(A)$.
For each $p\in P$, let $p\in C_\ell$ and let $q=\rho(p)\in C_m$ be its assigned coordinate.
Change coordinate $p$ from $\beta_\ell$ to $\alpha_\ell$ and coordinate $q$ from $\alpha_m$ to $\beta_m$.
At that two-coordinate step, 
\[
y_p=\beta_\ell>x_p=\alpha_\ell>x_q=\beta_m>y_q=\alpha_m.
\]
The equal-gap condition gives
\[
\beta_\ell-\alpha_\ell=\beta_m-\alpha_m \iff 
y_p+y_q=x_p+x_q.
\]
Because $m>\ell$, we also have $x_p=\alpha_\ell>x_q=\beta_m$.
All other coordinates are unchanged in that paired step.
Each step is a PD improvement from $y$ to $x$: utility is transferred from the better-off coordinate to the worse-off coordinate, total utility over the two coordinates is unchanged, and the resulting pair is less  unequal. 
We denote the utility sequence obtained from performing these finitely many PD transfers by \(z\). 
By finitely many PD transfers and transitivity, we get $z\succ \Phi(A)$.

It remains to compare \(z\) with \(\Phi(B) =x\). 
On every coordinate in \(P\), the stream \(z\) agrees with \(\Phi(B)\), since both assign \(\alpha_\ell\) on each block \(C_\ell\) with \(\ell\in A\setminus B\). 
On every coordinate in \(\rho(P)\subseteq C_m\), the stream \(z\) also agrees with \(\Phi(B)\), since both assign \(\beta_m\). 
On the remaining coordinates of \(C_m\), \(z\) still assigns \(\alpha_m\), while \(\Phi(B)\) assigns \(\beta_m\). 
On any other block \(C_j\) with \(j\in B\setminus A\), the stream \(z\) assigns \(\alpha_j\), while \(\Phi(B)\) assigns \(\beta_j\). 
On all remaining blocks the two streams agree. 
Therefore, $\Phi(B)\geq z$.
By monotonicity, $\Phi(B)\succeq z$.
Consequently,
\[
x=\Phi(B)\succeq z\succ \Phi(A)=y \implies  x=\Phi(B)\succ \Phi(A)=y.
\]
\end{proof}

\begin{proof} [\textbf{Proof of Theorem \ref{T2}}]
We apply the common block construction above. For part \textup{(a)}, Lemma~\ref{L2} supplies the required strict comparisons; for part \textup{(b)}, Lemma~\ref{L3} supplies them.
Let \(\succeq\) be the assumed SWO. For \(S\in\calS\), define
\[
S\in\OmegaC
\quad\Longleftrightarrow\quad
\Phi(S_1)\succ\Phi(S_2).
\]
We show that \(\OmegaC\) is non-Ramsey. 
Thus, for each \(S\in\calS\), we must show that \(\calS(S)\) intersects both \(\OmegaC\) and its complement \(\calS\setminus\OmegaC\).
It is enough to prove that for every \(S\in\calS\), there exists \(T\in\calS(S)\) such that
\[
T\in\OmegaC
\quad\Longleftrightarrow\quad
S\notin\OmegaC .
\]
Fix $S=\{s_1, s_2, s_3,\cdots\}\in\calS$ with $s_1< s_2 <s_3<\cdots$, and let $I_k=[s_k, s_{k+1})$.
We consider the following two cases.
\begin{enumerate}
\item
Suppose $S\in\OmegaC$.
Then $\Phi(S_1)\succ\Phi(S_2)$.
Drop $s_1$ to obtain $T=\{s_2,s_3,s_4,\cdots\}$.
Then $T\in\calS(S)$ and $T_1=S_2\setminus I_1$ and $T_2=S_1$.
Since $T_1\subseteq S_2$, Lemma \ref{L1} gives $\Phi(S_2)\succeq\Phi(T_1)$.
Since $S\in\OmegaC$, we have
\[
\Phi(T_2)=\Phi(S_1)\succ\Phi(S_2)\succeq\Phi(T_1) \implies 
\Phi(T_2)\succ\Phi(T_1).
\]
Therefore $T\notin\OmegaC$.
Thus $\calS(S)$ intersects $\calS\setminus\OmegaC$.

\item
Suppose $S\notin\OmegaC$. By completeness, $\Phi(S_2)\succeq\Phi(S_1)$.
Drop $s_1$, $s_{4n}$, $s_{4n+1}$ for all $n\in \N$ to obtain 
\[
T=\{s_2,s_3,s_6,s_7,s_{10},s_{11},\cdots\}.
\]
Then $T\in\calS(S)$.
For this $T$, we have
\[
T_1=(I_3\cup I_4\cup I_5)\cup(I_7\cup I_8\cup I_9)\cup\cdots\;\text{and}\;  T_2=I_2\cup I_6\cup I_{10}\cup\cdots.
\]
Thus, $S_2\setminus T_1=I_1$ is finite and non-empty, while $
T_1\setminus S_2=I_4\cup I_8\cup I_{12}\cup\cdots$ is infinite and unbounded.
In particular, there exists $m\in T_1\setminus S_2$ such that $m>\max(S_2\setminus T_1)$.
For part \textup{(a)}, Lemma \ref{L2} gives $\Phi(T_1)\succ\Phi(S_2)$.
For part \textup{(b)}, Lemma \ref{L3} gives the same strict preference.
Since $T_2\subseteq S_1$, Lemma \ref{L1} gives $\Phi(S_1)\succeq\Phi(T_2)$.
Therefore, 
\[
\Phi(T_1)\succ\Phi(S_2)\succeq\Phi(S_1)\succeq\Phi(T_2) \implies \Phi(T_1)\succ\Phi(T_2).
\]
So $T\in\OmegaC$ and  $\calS(S)$ intersects $\OmegaC$.

\end{enumerate}
\end{proof}

\section{SWO satisfying Strong Equity: Representation and Construction}\label{sec4}

\subsection{Representation}
In this section, we investigate social welfare orders satisfying Strong Equity. 
Unlike Hammond Equity, Strong Equity itself generates strict comparisons between infinite utility streams. 
It is therefore natural to study the structure of the strict relation generated by Strong Equity, both from the viewpoint of real-valued representation by a social welfare function and from the viewpoint of explicit construction of a social welfare order. 
The results depend crucially on the structure of the utility domain \(Y\).

The following proposition gives a positive representation result: when \(Y\) is countable, there exists a social welfare function satisfying Strong Equity.

\begin{proposition}
\label{P1}
Let \(Y\subseteq\R\) be non-empty and countable, and $X=Y^{\N}$. 
Then there exists a social welfare function $W:X\to\R$ satisfying Strong Equity. 
\end{proposition}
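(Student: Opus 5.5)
The plan is to exploit the fact that Strong Equity only ever compares streams that differ in exactly two coordinates. Call $x,y\in X$ \emph{eventually equivalent}, written $x\approx y$, if $\{n\in\N: x_n\neq y_n\}$ is finite. Every Strong Equity comparison relates two $\approx$-equivalent streams. So it suffices to define $W$ separately on each equivalence class $[x]$, with no constraints linking different classes.

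Fix a class $[x]$. Each element of $[x]$ is obtained from $x$ by changing finitely many coordinates to values in $Y$. Since $Y$ is countable, $[x]$ is countable: it is a countable union, over finite $F\subseteq\N$, of sets of size at most $|Y|^{|F|}$. Let $P$ be the strict relation on $[x]$ generated by Strong Equity, meaning $zPw$ iff $z$ and $w$ are related as in Definition~\ref{D4} with $z$ the preferred stream. Let $P^{+}$ be its transitive closure.

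\textbf{Step 1 (acyclicity).} I expect this to be the main obstacle, because $Y$ need not be well-ordered, so the threshold-code argument of Theorem~\ref{T1} is unavailable. Take a putative cycle $z^1Pz^2P\cdots Pz^kPz^1$. Let $F$ be the finite set of coordinates where some $z^r$ differs from $z^1$. Every step of the cycle alters only coordinates in $F$. For each $r$, let $v^r$ be the nondecreasing rearrangement of $(z^r_n)_{n\in F}$, and compare these vectors by the leximin (lexicographic) order on $\R^{|F|}$.

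A Strong Equity step from $w$ to $z$ replaces the values $w_i<w_j$ by $z_i,z_j$ with $w_j>z_j>z_i>w_i$. I will check that this strictly raises the sorted vector lexicographically. Consider the first sorted position where the multisets differ: the entries below $w_i$ are untouched, and $w_i$ is removed while only the larger values $z_i,z_j$ are inserted. This is the standard fact that a Hammond-type progressive transfer is a strict leximin improvement. Since leximin is a strict order on $\R^{|F|}$, we get $v^1<_{\lex}v^1$ along the cycle, a contradiction. Hence $P^{+}$ is a strict partial order on $[x]$.

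\textbf{Step 2 (linear extension and embedding).} Enumerate $[x]=\{w^1,w^2,\dots\}$ and extend $P^{+}$ to a strict linear order $\triangleright$ on $[x]$. On a countable set this is elementary: process pairs in the enumeration order, and at each stage add the pair $(w^a,w^b)$ in whichever direction is consistent, then retake the transitive closure; the usual Szpilrajn step shows one direction is always consistent. A countable linear order embeds order-preservingly into $\mathbb{Q}\cap(0,1)$, by the greedy Cantor construction that places each new element in the appropriate rational gap. Let $f_{[x]}:[x]\to(0,1)$ be such an embedding.

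\textbf{Step 3 (assembly).} Choose one such $f_{[x]}$ for each class and set $W(z)=f_{[z]}(z)$. This uses choice across the uncountably many classes, consistent with the paper's remark that the argument is non-explicit in general. If $z$ and $w$ satisfy the Strong Equity premise with $z$ the improved stream, then $z\approx w$ and $zP^{+}w$. Hence $z\triangleright w$ and $W(z)>W(w)$, so the relation induced by $W$ satisfies Strong Equity. Streams in different classes are ranked arbitrarily by their values, which is harmless because no Strong Equity comparison crosses classes.
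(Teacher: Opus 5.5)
Your proof is correct and has the same architecture as the paper's proof. You split $X$ into eventual-equality classes, show that the one-step Strong Equity relation is acyclic on each countable class, extend it to a strict linear order, embed that order into $\mathbb{Q}$, and glue the class-wise maps by a choice across classes. The differences are local.

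For acyclicity, the paper takes the finitely many utility values $y_1<\cdots<y_K$ occurring in the cycle and sets $\varphi_F(y_i)=-2^i$. It then shows that the additive potential $\sum_{n\in J}\varphi_F(z_n)$ strictly rises under every step $(a,d)\mapsto(b,c)$ with $a<b<c<d$, because $2^j+2^k\le 3\cdot 2^{\ell-2}<2^i+2^{\ell}$. You use instead the leximin order on sorted restrictions to the coordinates involved, and your first-difference argument is sound. The common entries up to $w_i$ fill the same initial positions in both sorted vectors. At the next position, the old vector still holds the removed copy of $w_i$, while the new one holds a value strictly above $w_i$. This remains true when untouched coordinates tie with $w_i$. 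Leximin needs no relabelling of values and is just the familiar fact that Hammond-type transfers are leximin improvements. The paper's device instead yields an additively separable numerical certificate on each finite value set.

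For the extension, the paper invokes Szpilrajn's Lemma and then remarks that the proof must choose $<_E$ and $e_E$ on every class. Your stage-by-stage extension along an enumeration, followed by Cantor's greedy embedding into $\mathbb{Q}\cap(0,1)$, is choice-free once a class has been enumerated. It therefore confines the non-constructive content to a single selection of base points (equivalently, enumerations) for the uncountably many classes. At least when $|Y|\ge 8$, Proposition~\ref{P21} shows that some non-constructive principle must enter at that point.
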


\begin{proof}
We partition \(X\) into equivalence classes under the relation of eventual equality: two streams \(x,y\in X\) belong to the same class if there exists \(N\in\N\) such that $x_n=y_n$ for all $n\geq N$.
Fix an eventual-equivalence class \(E\). 
For \(x,y\in E\), we say $x\triangleright_E y$ if \(x\) can be obtained from \(y\) by finitely many one-step Strong Equity improvements. 
Equivalently, each step replaces two coordinates of the form $(a,d)\mapsto(b,c)$, with $a<b<c<d$, while all the remaining utility levels stay unchanged.

We first show that \(\triangleright_E\) has no cycles. 
Suppose, to the contrary, that there exists a cycle. 
Expanding each comparison \(x\triangleright_E y\) into its finite chain of one-step Strong Equity improvements, we obtain a finite closed chain of one-step Strong Equity improvements. 
This closed chain involves only finitely many utility streams, with changes in finitely many coordinates that alter only finitely many utility values.
Let the finite set of utility values appearing in this closed chain be
\[
F=\{y_1,\cdots,y_K\},
\;
y_1<y_2<\cdots<y_K.
\]
We define
\[
\varphi_F(y_i)=-2^i
\quad(i=1,\cdots,K).
\]
If \(y_i<y_j<y_k<y_\ell\), then \(i<j<k<\ell\), and therefore
\[
\varphi_F(y_j)+\varphi_F(y_k) = -2^j-2^k>-2^i-2^{\ell}= \varphi_F(y_i)+\varphi_F(y_\ell).
\]
Since \(j<k<\ell\), we have $j\leq \ell-2$, and $k\leq \ell-1$.
Therefore
\[
2^j+2^k
\leq 2^{\ell-2}+2^{\ell-1}
=3\cdot 2^{\ell-2}
<4\cdot 2^{\ell-2}
=2^\ell
<2^i+2^\ell .
\]
Thus the two-coordinate sum generated by \(\varphi_F\) strictly increases under every one-step Strong Equity improvement appearing in the closed chain.
Let \(J\subseteq\N\) be the finite set of coordinates used in the closed chain. 
For each utility sequence \(z\) appearing in the chain, define
\[
S_F(z)=\sum_{n\in J}\varphi_F(z_n).
\]
Observe that every one-step Strong Equity improvement in the closed chain strictly increases \(S_F\). 
Hence \(S_F\) would strictly increase around a closed cycle and return to its initial value, a contradiction. 
Therefore \(\triangleright_E\) is acyclic.

Since \(Y\) is countable, each equivalence class \(E\) is countable. 
Indeed, fixing any \(x\in E\), every element of \(E\) differs from \(x\) in only finitely many coordinates. Hence \(E\) is contained in a countable union of finite powers of the countable set \(Y\), and is therefore countable.

Let \(\triangleright_E^{+}\) denote the transitive closure of \(\triangleright_E\). 
Since \(\triangleright_E\) is acyclic, \(\triangleright_E^{+}\) is irreflexive: otherwise \(x\triangleright_E^{+}x\) for some \(x\), which would yield a finite cycle for \(\triangleright_E\). 
As \(\triangleright_E^{+}\) is transitive by definition, it is a strict partial order.
Now consider the inverse strict partial order \((\triangleright_E^{+})^{-1}\), defined by
\[
y(\triangleright_E^{+})^{-1}x
\quad\Longleftrightarrow\quad
x\triangleright_E^{+}y.
\]
By Szpilrajn Lemma, every strict partial order has a strict linear-order extension.%
\footnote{Equivalently, by the acyclic-extension corollary of Szpilrajn Lemma, a binary relation has a strict linear-order extension if and only if it is acyclic; see \citet[Corollary~3.1]{andrikopoulos2022}.}
Therefore there exists a strict linear order \(<_E\) on \(E\) extending \((\triangleright_E^{+})^{-1}\). Hence
\[
x\triangleright_E y
\quad\Longrightarrow\quad
x\triangleright_E^{+} y
\quad\Longrightarrow\quad
y<_E x .
\]
By the standard universality of the rational order, every countable linear order admits an order-preserving embedding into \((\mathbb Q,<)\)  (see \citet[Theorem 4.2, page 38]{jech2002}). 
Hence there exists an order-preserving injection
\[
e_E:(E,<_E)\to(\mathbb Q,<).
\]
Define
\[
W_E(x)=e_E(x)\qquad(x\in E).
\]
If \(x\triangleright_E y\), then \(y<_E x\), and therefore
\[
W_E(y)=e_E(y)<e_E(x)=W_E(x).
\]
Thus
\[
x\triangleright_E y
\quad\Longrightarrow\quad
W_E(x)>W_E(y).
\]
Perform this construction for each eventual-equivalence class $E$. Since the eventual-equivalence classes partition $X$, define $W:X\to\R$ by $W(x)=W_E(x)$ whenever $x\in E$. This is well-defined.
Now suppose that \(x\) is obtained from \(y\) by a one-step Strong Equity redistribution. 
Then \(x\) and \(y\) belong to the same equivalence class \(E\), and \(x\triangleright_E y\). 
Hence
\[
W(x)=W_E(x)>W_E(y)=W(y).
\]
Therefore \(W\) satisfies Strong Equity.
\end{proof}
Observe that the preceding existence proof is not constructive in the sense used in this paper. 
The appeal to the Szpilrajn Lemma is an appeal to an order-extension principle, customarily proven by means of Zorn's Lemma, and the proof also requires choosing a linear extension $<_E$ and a representing map $e_E$ on each eventual-equivalence class. 
This naturally raises the question whether, for countable domains \(Y\), one can explicitly construct an SWF satisfying SE. 
We leave this question open. 
To illustrate the issue, the following example presents an explicit formula for the special case in which $Y$ is a seven-element domain.
Observe that the SWF turns out to be a continuous function.
\begin{example}\label{Ex1}
Let $Y=\{a,b,c,d,e,f,g\}$ where $a<b<c<d<e<f<g$.
Given $x\in X$, define
\[
\begin{cases}
N_0(x)&\equiv \{n\in\N:x_n=a \text{ or } x_n=g\},\\[6pt]
N_1(x)&\equiv \{n\in\N:x_n=b \text{ or } x_n=f\},\\[6pt]
N_2(x)&\equiv \{n\in\N:x_n=c \text{ or } x_n=d  \text{ or } x_n=e\},
\end{cases}
\]
and 
\[
\alpha_0(n) = \frac{1}{5^n},\;\alpha_1(n) = \frac{1}{3^n},\;\alpha_2(n)= \frac{1}{2^n}.
\]
Define $W:X\to\R$ by
\[
W(x)= \sum_{n\in N_0(x)}\alpha_0(n)+ \sum_{n\in N_1(x)}\alpha_1(n) + \sum_{n\in N_2(x)}\alpha_2(n).
\]
Then $W$ is continuous in the standard Baire topology and satisfies Strong Equity.%
\footnote{An example of similar nature was communicated to the first author via an unpublished note by  \citet{mitra2010}.
It was referred to in \citet{dubey2016} without specifying the details.}

\end{example}

\begin{proof}
Let $x,y\in X$ be such that there exist $i,j\in\N$ with $y_j>x_j>x_i>y_i$ and $x_k=y_k$ for all  $k\in\N\setminus\{i,j\}$.
We need to show that $W(x)>W(y)$.  
Since $x$ and $y$ differ only at coordinates $i$ and $j$, it is enough to compare the contributions of these two coordinates.

Note first that $x_j\in\{c,d,e,f\}$.  
So, the contribution of $x_j$ to $W(x)$, call it $U(x)$, is either $\alpha_2(j)$ or $\alpha_1(j)$.  
If it is $\alpha_1(j)$, then $y_j=g$ and the contribution of $y_j$ to $W(y)$, call it $U(y)$, is $\alpha_0(j)<\alpha_1(j)$.  
If $U(x)=\alpha_2(j)$, then $U(y)\leq U(x)$, and equality can occur only if $x_j,y_j\in\{c,d,e\}$.  
Since $y_j>x_j$, in this equality case we must have $x_j\in\{c,d\}$.

Note next that $x_i\in\{b,c,d,e\}$.  
So, the contribution of $x_i$ to $W(x)$, call it $V(x)$, is either $\alpha_1(i)$ or $\alpha_2(i)$.  
If it is $\alpha_1(i)$, then $y_i=a$ and the contribution of $y_i$ to $W(y)$, call it $V(y)$, is $\alpha_0(i)<\alpha_1(i)$.  
If $V(x)=\alpha_2(i)$, then $V(y)\leq V(x)$, and equality can occur only if $x_i,y_i\in\{c,d,e\}$.  
Since $x_i>y_i$, in this equality case we must have $x_i\in\{d,e\}$.

If neither of the two inequalities above is strict, then we must have $x_j\in\{c,d\}$ and $x_i\in\{d,e\}$.  
But then $x_i\geq d\geq x_j$, contradicting $x_j>x_i$.  
Hence at least one of the two inequalities is strict.  
Since all other coordinates have the same contribution in $W(x)$ and $W(y)$, we get $W(x)>W(y)$.
Thus $W$ satisfies Strong Equity.

Next, we show that \(W\) is continuous in the standard Baire topology.
Recall that the Baire topology is generated by the basic open sets
\(U_{\sigma}\), for \(\sigma \in Y^{<\omega}\), defined by
\[
U_{\sigma}
:=
\left\{
y \in Y^{\omega} : \sigma \subseteq y
\right\}.
\]
Fix \(\alpha \in (0,1)\) and let \(\varepsilon>0\) be arbitrary. 
Define
\[ 
X_{\alpha} =\left\{ z\in Y^{\omega}: W(z)= \alpha\right\},
\] 
and
\[
X_{\alpha,\varepsilon}
:=
\left\{
z \in Y^{\omega} :
W(z) \in (\alpha-\varepsilon,\alpha+\varepsilon)
\right\}.
\]
Let \(z \in X_{\alpha,\varepsilon}\), and set
\[
\delta
:=
\varepsilon-\lvert W(z)-\alpha\rvert>0.
\]
For every \(m \in \omega\), note that
\[
\sup
\left\{
\lvert W(x)-W(z)\rvert :
x \in U_{z\upharpoonright m}
\right\}
\leq
\sum_{k=m}^{\infty}\frac{1}{2^k}
=
\frac{1}{2^{m-1}}.
\]
Hence, we may choose \(m \in \omega\) sufficiently large that
\[
\frac{1}{2^{m-1}}<\delta.
\]
It follows that
\[
U_{z\upharpoonright m}
\subseteq
X_{\alpha,\varepsilon}.
\]
Therefore, \(X_{\alpha,\varepsilon}\) is open, and thus \(W\) is continuous.
\end{proof}

Next we claim that for
\[
Y=\{a,b,c,d,e,f,g,h\},
\quad
a<b<c<d<e<f<g<h,
\]
the existence of an SWF satisfying SE implies the existence of a non-Ramsey set. 
In this sense, any SWF proved to exist on a domain \(Y^{\mathbb{N}}\), where \(Y\) contains more than seven distinct elements, is a non-constructive object.
This refines \citet[Proposition 3]{sakamoto2012}  where it has been shown that there exist SWF satisfying PD and Weak Dominance (WD) for $Y=[0,1]$. %
\footnote{Weak Dominance is defined as follows: for all $x,y\in X$, if there exists $j\in\N$ such that $x_j>y_j$ and $x_n=y_n$ for all $n\neq j$, then $W(x)>W(y)$.}
Our result on SWF satisfying SE is applicable here and shows that the SWF satisfying PD only is non-constructive for $Y$ containing more than seven elements.
Hence, WD is not needed to prove the non-constructive nature here.
Further,  \citet[Propostion 5]{alcantud2010a} prove that there exist SWF satisfying PD, AN and WD for $Y=[0,1]$ and \citet[Theorem 5]{mitra2007c} prove that there exist SWF satisfying AN and WD for $Y=[0,1]$.
We refine these two results in the following manner.
Using the proof technique adopted here, it is possible to show that for $Y=\{a, b\}$, existence of any SWF satisfying AN and WD implies the existence of a non-Ramsey set.
We also show that SWF satisfying AN and PD is non-constructive in nature for and $Y$ containing four or more elements, i.e., for all non-trivial domain $Y$.
We provide the details in the appendix.

\begin{proposition}
\label{P21}
\emph{Let
\[
Y=\{a,b,c,d,e,f,g,h\},\; \text{such that}\; a<b<c<d<e<f<g<h,
\]
and let $X=Y^{\omega}$.
Then the existence of a social welfare function \(W:X\to\mathbb{R}\) satisfying Strong Equity implies the existence of a non-Ramsey set.}
\end{proposition}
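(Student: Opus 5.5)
The plan is to adapt the block-coding argument in the proof of Theorem \ref{T2} to a setting where \emph{no} efficiency axiom is available. Monotonicity and Lemma \ref{L1} are therefore unavailable, and every comparison has to come from finitely many Strong Equity steps. Completeness and transitivity come for free from the real-valued representation $x\succeq_W y \iff W(x)\geq W(y)$. The eight levels $a<\cdots<h$ are to be split into two interleaved ``registers'', $\{a,d,e,h\}$ and $\{b,c,f,g\}$. The point of the split is that a Strong Equity step $(p,s)\mapsto(q,r)$ with $p<q<r<s$ can move one coordinate upward in one register while moving another coordinate downward in the other.

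For $S=\{s_1<s_2<\cdots\}\in\calS$, let $I_k=[s_k,s_{k+1})$ and let $S_1,S_2$ be as in Section \ref{sec3}. I will define two streams $x(S)$ and $x'(S)$ by filling coordinates in $S_1$ and $S_2$ with values chosen so that three requirements hold:
\begin{enumerate}[label=(\roman*)]
\item $x(S)$ and $x'(S)$ agree outside a set of coordinates that is a union of the intervals $I_k$;
\item whenever $T\in\calS(S)$ is obtained from $S$ by deleting finitely many elements, or by deleting a pattern such as $s_1,s_{4n},s_{4n+1}$ as in Theorem \ref{T2}, the pair $(x(T),x'(T))$ differs from $(x'(S),x(S))$ only on a set on which a finite sequence of Strong Equity steps applies;
\item the comparison produced in (ii) has the right direction. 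Since the codes can genuinely differ on infinitely many coordinates, I will first try to replace this with ``sandwich'' comparisons: chains $x(T)\succ z_1\sim\cdots$ in which each link is one finite Strong Equity modification.
\end{enumerate}
Then set $\OmegaC=\{S\in\calS: W(x(S))>W(x'(S))\}$.

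The key steps, in order, are as follows.
\begin{enumerate}[label=(\arabic*)]
\item Fix the block sizes and the coding: a block $I_k$ in $S_1$ gets values in the outer pair $\{a,h\}$ or $\{d,e\}$, and a block in $S_2$ gets values in $\{b,g\}$ or $\{c,f\}$. The choice within each pair is governed by position inside the block, so that deleting $s_1$ swaps the roles of the two registers.
\item Prove an analogue of Lemma \ref{L2}, with Strong Equity alone, for pairs of codes that differ on a finite union of blocks. This is done by explicit pairings $(a,h)\to(b,g)$, $(b,g)\to(c,f)$, $(c,f)\to(d,e)$, each of which is a Strong Equity step.
\item Run the two-case argument of Theorem \ref{T2}. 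If $S\in\OmegaC$, take $T=S\setminus\{s_1\}$. If $S\notin\OmegaC$, use the thinned $T$. In each case, use the lemma from step (2) and transitivity of $\geq$ on the reals to get $T\in\OmegaC\iff S\notin\OmegaC$.
\end{enumerate}

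The main obstacle is step (2) combined with requirement (iii). In Theorem \ref{T2}, the infinitely many coordinates where $\Phi(T_1)$ and $\Phi(S_2)$ differ were absorbed by monotonicity, and here nothing absorbs them. My first attempt will be to choose the codes so that passing from $S$ to $T$ changes $x$ and $x'$ by the \emph{same} infinite pattern. That common change then cancels in the comparison, and only a finite residue remains for Strong Equity to handle. This is what the fourth pair of levels (hence eight elements, matching the sharpness suggested by Example \ref{Ex1}) should buy: one spare register to carry the common change. If exact cancellation fails, the fallback is a Lauwers-type argument. There, $\OmegaC$ is defined by comparing $W(x(S))$ with $W(x(S\setminus\{s_1\}))$, and the real-valued representation is used to rule out an infinite strictly monotone chain of values whose parity pattern alternates.
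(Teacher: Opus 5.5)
There is a genuine gap, and it is structural. Your main plan uses $W$ only through the complete, transitive relation it induces, together with finitely many Strong Equity steps. That cannot succeed on this domain. $Y$ is finite, hence well-ordered, so Theorem~\ref{T1} already gives an explicitly defined SWO $\succeq^{\ast}$ on $X$ satisfying Strong Equity. Suppose your argument showed that $\OmegaC=\{S: x(S)\succ x'(S)\}$ is non-Ramsey for every SE-respecting SWO. Applying it to $\succeq^{\ast}$, which is a Borel relation, with your continuously coded streams $x(S),x'(S)$, would give a Borel non-Ramsey collection. That contradicts the Galvin--Prikry theorem. So the real-valuedness of $W$ must enter essentially, and in your main plan it does not.

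The failure also shows up locally in step (2). On a finite domain, no chain of SE steps can create a coordinate at level $a$: lowering a coordinate to $a$ needs a partner ending strictly below $a$, and raising it to $a$ needs it to start below $a$. The same holds symmetrically for $h$. Hence an SE chain can never \emph{lose} a good block at one position (for example, turn $\{d,e\}$ into $\{a,h\}$) while gaining one elsewhere. Your pairings $(a,h)\to(b,g)\to(c,f)\to(d,e)$ are pure gains. But Case~2 of Theorem~\ref{T2}, and likewise of Proposition~\ref{P4}, needs exactly such loss--gain comparisons. The hoped-for ``cancellation'' of a common infinite change has no basis either: SE is silent on streams that differ in infinitely many coordinates, and no separability axiom is available.

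Your fallback points in the right direction but omits the decisive device. Real-valuedness does \emph{not} rule out infinite strictly monotone chains of values. What it does provide, through separability, is a natural-number tag for each nonempty interval, with disjoint intervals getting distinct tags. The paper's construction is as follows.
\begin{itemize}
\item Set $x_\alpha=(c,f,\ldots)$ and $y_\alpha=(d,e,\ldots)$. The pair at coordinates $(2m+1,2m+2)$ is $(b,g)$ if $m\in\alpha$ and $(a,h)$ otherwise.
\item SE gives $W(x_\alpha)<W(y_\alpha)$.
\item Two cross steps, $(d,a)\to(c,b)$ and $(e,h)\to(f,g)$, give $W(y_{\alpha\setminus\{n\}})<W(x_\alpha)$. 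These steps move coordinates exactly from your register $\{a,d,e,h\}$ into $\{b,c,f,g\}$.
\item Hence the intervals $(W(x_\alpha),W(y_\alpha))$ and $(W(x_{\alpha\setminus\{n\}}),W(y_{\alpha\setminus\{n\}}))$ are disjoint.
\item Fix an enumeration of $\mathbb{Q}$ and let $\gamma(\alpha)$ be the least index of a rational in the first interval. Then $\gamma(\alpha)\neq\gamma(\alpha\setminus\{n\})$.
\item Define $\OmegaC=\{\alpha:\gamma(\alpha)<\gamma(\alpha\setminus\{n\})\ \text{for all } n\in\alpha\}$.
\end{itemize}
Non-Ramseyness then follows by infinite descent in $\N$ in both directions. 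In one direction you add the odd-indexed points of $\alpha$ back, one at a time, to $\{\alpha_2,\alpha_4,\ldots\}$; in the other you delete points one at a time. Only one-point deletions are ever compared, so no block has to be moved, and this is how the paper escapes the obstruction above.
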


\begin{proof}
Let
\[
\mathcal{S}=[\mathbb{N}]^{\mathbb{N}},
\]
and consider \(\alpha\in\mathcal{S}\).
Denote
\[
\alpha=\{t_1,t_2,\ldots\},
\]
where \(t_k<t_{k+1}\) for every \(k\in\mathbb{N}\).
Given \(\alpha\), we construct a pair of infinite utility streams as
follows:
\[
x_{\alpha}(n)=
\begin{cases}
c, & n=1,\\
f, & n=2,\\
b, & n=2m+1,\quad m\in\alpha,\\
g, & n=2m+2,\quad m\in\alpha,\\
a, & n=2m+1,\quad m\notin\alpha,\\
h, & n=2m+2,\quad m\notin\alpha,
\end{cases}
\]
and
\[
y_{\alpha}(n)=
\begin{cases}
d, & n=1,\\
e, & n=2,\\
x_{\alpha}(n)& n>2.
\end{cases}
\]
Only the first two terms differ between \(x_{\alpha}\) and
\(y_{\alpha}\). For every \(t_k\in\alpha\), the terms \(2t_k+1\) and
\(2t_k+2\) are assigned \(b\) and \(g\), respectively. For every
\(m\in\mathbb{N}\setminus\alpha\), the terms \(2m+1\) and \(2m+2\) are
assigned \(a\) and \(h\), respectively. Observe that
\[
x_{\alpha}(1)=c<d=y_{\alpha}(1)
<y_{\alpha}(2)=e<f=x_{\alpha}(2).
\]
Therefore, by Strong Equity,
\[
W(x_{\alpha})<W(y_{\alpha}).
\]
Further, let
\[
\alpha^{\prime}=\alpha\setminus\{t_k\}
\]
for some \(k\in\mathbb{N}\). Then
\begin{center}
\renewcommand{\arraystretch}{1.3}
\begin{tabular}{c|cccc}
 & \(1\) & \(2\) & \(2t_k+1\) & \(2t_k+2\)\\
\hline
\(y_{\alpha^{\prime}}\) & \(d\) & \(e\) & \(a\) & \(h\)\\
\(x_{\alpha}\)          & \(c\) & \(f\) & \(b\) & \(g\)
\end{tabular}
\end{center}
There are no changes in the utilities of any other generation.
Observe that
\[
y_{\alpha^{\prime}}(2t_k+1)
=a
<b
=x_{\alpha}(2t_k+1)
<c
=x_{\alpha}(1)
<d
=y_{\alpha^{\prime}}(1),\;\text{and}
\]
\[
y_{\alpha^{\prime}}(2)
=e
<f
=x_{\alpha}(2)
<g
=x_{\alpha}(2t_k+2)
<h
=y_{\alpha^{\prime}}(2t_k+2).
\]
\noindent Let \(z\in X\) be defined as follows:
\[
z(n)=
\begin{cases}
c, & n=1,\\
b, & n=2t_k+1,\\
y_{\alpha^{\prime}}(n), & \text{otherwise}.
\end{cases}
\]
By the first of the preceding inequalities and Strong Equity,
\[
W(y_{\alpha^{\prime}})<W(z).
\]
Moreover, \(z\) and \(x_{\alpha}\) differ only at generations \(2\) and
\(2t_k+2\). Therefore, by the second inequality and Strong Equity,
\[
W(z)<W(x_{\alpha}).
\]
Hence,
\[
W(y_{\alpha^{\prime}})<W(x_{\alpha}).
\]
Thus, we have obtained the nonempty open interval
\[
\mathbb{I}(\alpha)
=
\bigl(W(x_{\alpha}),W(y_{\alpha})\bigr).
\]
For every
\[
\alpha^{\prime}=\alpha\setminus\{t_k\},
\quad
t_k\in\alpha,
\]
we have
\[
W(y_{\alpha^{\prime}})<W(x_{\alpha}).
\]
Since Strong Equity also gives
\[
W(x_{\alpha^{\prime}})<W(y_{\alpha^{\prime}}),
\]
we obtain
\[
W(x_{\alpha^{\prime}})
<
W(y_{\alpha^{\prime}})
<
W(x_{\alpha})
<
W(y_{\alpha}).
\]
Consequently,
\[
\mathbb{I}(\alpha^{\prime})
=
\bigl(W(x_{\alpha^{\prime}}),W(y_{\alpha^{\prime}})\bigr)
\]
lies entirely to the left of \(\mathbb{I}(\alpha)\). In particular,
\[
\mathbb{I}(\alpha^{\prime})\cap I(\alpha)=\varnothing.
\]
Next, take an enumeration of the rational numbers and keep it fixed
for the remainder of the proof:
\[
\mathbb{Q}=\{q_1,q_2,\ldots\}.
\]
Since every nonempty open interval contains a rational number, let
\[
\gamma(\alpha)
=
\min\left\{
n\in\mathbb{N}:q_n\in I(\alpha)
\right\}.
\]
Since \(I(\alpha^{\prime})\) and \(I(\alpha)\) are disjoint, we have
\[
\gamma(\alpha)\neq\gamma(\alpha^{\prime})
\]
whenever
\[
\alpha^{\prime}=\alpha\setminus\{t_k\},
\qquad
t_k\in\alpha.
\]
Now consider the following collection of infinite subsets of
\(\mathbb{N}\):
\[
\Omega
=
\left\{
\alpha\in\mathcal{S}:
\gamma(\alpha)
<
\gamma\bigl(\alpha\setminus\{n\}\bigr)
\text{ for every }n\in\alpha
\right\}.
\]
We claim that \(\Omega\) is non-Ramsey.
Let
\[
\alpha=\{\alpha_1,\alpha_2,\ldots\}\in\mathcal{S},
\]
where \(\alpha_i<\alpha_{i+1}\) for every \(i\in\mathbb{N}\). We first
show that
\[
[\alpha]^{\mathbb{N}}\nsubseteq\Omega.
\]
Suppose, to the contrary, that
\[
\beta\in\Omega
\qquad
\text{for every }\beta\in[\alpha]^{\mathbb{N}}.
\]
Consider
\[
\beta=\{\alpha_2,\alpha_4,\ldots\}.
\]
Then \(\beta\in[\alpha]^{\mathbb{N}}\). For every
\(m\in\mathbb{N}\), define
\[
\beta_m
=
\beta\cup
\{\alpha_1,\alpha_3,\ldots,\alpha_{2m-1}\}.
\]
Thus,
\[
\beta_1=\beta\cup\{\alpha_1\},
\qquad
\beta_2=\beta\cup\{\alpha_1,\alpha_3\},
\]
and so on. Then \(\beta_m\in[\alpha]^{\mathbb{N}}\) for every
\(m\in\mathbb{N}\), and
\[
\beta_m
=
\beta_{m+1}\setminus\{\alpha_{2m+1}\}.
\]
By our supposition, \(\beta_{m+1}\in\Omega\). It follows from the
definition of \(\Omega\) that
\[
\gamma(\beta_{m+1})<\gamma(\beta_m)
\]
for every \(m\in\mathbb{N}\). This gives an infinite strictly
decreasing sequence in \(\mathbb{N}\), which is impossible. Therefore,
there exists
\[
\beta^{\prime}\in[\alpha]^{\mathbb{N}}
\]
such that
\[
\beta^{\prime}\notin\Omega.
\]
We next show that
\[
[\alpha]^{\mathbb{N}}\cap\Omega\neq\varnothing.
\]
Suppose, to the contrary, that
\[
\beta\notin\Omega
\qquad
\text{for every }\beta\in[\alpha]^{\mathbb{N}}.
\]
Set
\[
\beta_0=\alpha.
\]
Since \(\beta_0\notin\Omega\), there exists
\(t^{\prime}_1\in\beta_0\) such that
\[
\gamma(\beta_0)
\not<
\gamma\bigl(\beta_0\setminus\{t^{\prime}_1\}\bigr).
\]
The two values are distinct, and hence
\[
\gamma(\beta_0)
>
\gamma\bigl(\beta_0\setminus\{t^{\prime}_1\}\bigr).
\]
Denote
\[
\beta_1=\beta_0\setminus\{t^{\prime}_1\}.
\]
Then \(\beta_1\in[\alpha]^{\mathbb{N}}\), and hence
\(\beta_1\notin\Omega\) by assumption. Therefore, there exists
\(t^{\prime}_2\in\beta_1\) such that
\[
\gamma(\beta_1)
>
\gamma\bigl(\beta_1\setminus\{t^{\prime}_2\}\bigr).
\]
Denote
\[
\beta_2=\beta_1\setminus\{t^{\prime}_2\}.
\]
Continuing inductively, since every \(\beta_m\) is infinite, we obtain
\[
\gamma(\alpha)
=
\gamma(\beta_0)
>
\gamma(\beta_1)
>
\gamma(\beta_2)
>
\cdots,
\]
which is an infinite strictly decreasing sequence in
\(\mathbb{N}\). This is impossible. Hence, there exists some
\[
\beta^{\prime\prime}\in[\alpha]^{\mathbb{N}}
\]
such that
\[
\beta^{\prime\prime}\in\Omega.
\]

Since \(\alpha\in\mathcal{S}\) was arbitrary, every set
\([\alpha]^{\mathbb{N}}\) contains an element of \(\Omega\) and an
element of \(\mathcal{S}\setminus\Omega\). Therefore, \(\Omega\) is
non-Ramsey.
\end{proof}

Example \ref{Ex1} and Proposition \ref{P21} depict the following picture.
For $Y$ limited to seven distinct elements we have a SWF satisfying SE which is defined explicitly and is also continuous in the standard product topology.
However, for $Y$ containing more than seven elements, the existence of SWF satisfying SE entails existence of a non-Ramsey set, which is known to be a non-constructive object. 
This limits the applicability of the SWF satisfying SE when $Y$ is countable obtained in Proposition \ref{P1} severely as the SWF does not admit explicit description.
Moreover, we show in the proposition below that such representation cannot be continuous in the standard Baire topology.

\begin{proposition}
\emph{Let
\[
Y=\{a,b,c,d,e,f,g,h\},\; \text{such that}\; a<b<c<d<e<f<g<h,
\]
and let $X=Y^{\omega}$.
Suppose
\[
W:X\longrightarrow [0,1]
\]
is a utility function satisfying Strong Equity.
Then $W$ cannot be continuous with respect to the Baire topology on $Y^{\omega}$}.
\end{proposition}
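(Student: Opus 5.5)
The plan is to argue by contradiction, reusing the streams $x_\alpha, y_\alpha$ built in the proof of Proposition \ref{P21}. The key chain of inequalities established there is
\[
W(y_{\alpha\setminus\{t_k\}})<W(x_\alpha)<W(y_\alpha)
\qquad\text{for every } t_k\in\alpha .
\]
Its derivation used only Strong Equity and transitivity of the real order, so it holds for any $W$ satisfying Strong Equity, in particular for a map $X\to[0,1]$. Continuity will be played against this uniform strict gap.

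\textbf{Step 1 (the fixed gap).} Fix a single infinite set $\alpha=\{t_1<t_2<\cdots\}\in\mathcal S$; the choice $\alpha=\mathbb N$ suffices. Set $\alpha'_k=\alpha\setminus\{t_k\}$. By Strong Equity (comparing generations $1$ and $2$),
\[
\varepsilon:=W(y_\alpha)-W(x_\alpha)>0 .
\]
By the two-step Strong Equity argument through the intermediate stream $z$ in the proof of Proposition \ref{P21},
\[
W(y_{\alpha'_k})<W(x_\alpha)=W(y_\alpha)-\varepsilon \qquad\text{for every }k .
\]

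\textbf{Step 2 (convergence in the Baire topology).} The streams $y_{\alpha'_k}$ and $y_\alpha$ coincide except at the two coordinates $2t_k+1$ and $2t_k+2$. There $y_{\alpha'_k}$ takes the values $a,h$ while $y_\alpha$ takes $b,g$. Since $t_k\to\infty$, for every $m$ we have $y_{\alpha'_k}\upharpoonright m=y_\alpha\upharpoonright m$ once $2t_k+1>m$. Hence $y_{\alpha'_k}$ lies in every basic neighbourhood $U_{y_\alpha\upharpoonright m}$ for all large $k$, that is, $y_{\alpha'_k}\to y_\alpha$ in the Baire topology.

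\textbf{Step 3 (contradiction).} If $W$ were continuous at $y_\alpha$, we would have $W(y_{\alpha'_k})\to W(y_\alpha)$. Concretely, the preimage of $(W(y_\alpha)-\varepsilon,\,W(y_\alpha)+\varepsilon)$ is an open set containing $y_\alpha$, so it contains some $U_{y_\alpha\upharpoonright m}$. That neighbourhood contains $y_{\alpha'_k}$ for large $k$, which forces
\[
W(y_{\alpha'_k})>W(y_\alpha)-\varepsilon=W(x_\alpha),
\]
contradicting Step 1. Thus $W$ is discontinuous; in fact it is discontinuous at every $y_\alpha$.

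I do not expect a genuine obstacle here. The only point needing care is to confirm that the inequality $W(y_{\alpha'})<W(x_\alpha)$ from Proposition \ref{P21} holds for every $k$ with the \emph{same} $x_\alpha$, so that the gap $\varepsilon$ is independent of $k$. It does: the intermediate stream $z$ changes only generations $1$ and $2t_k+1$, and then $2$ and $2t_k+2$, and each change is a legitimate Strong Equity comparison regardless of $k$. The codomain restriction $[0,1]$ plays no role beyond $W$ being real-valued.
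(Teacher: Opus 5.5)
Your proof is correct, and it takes a different and more economical route than the paper's.

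\textbf{The paper's route.} The paper indexes streams by $\alpha\in(0,1)$ through a fixed enumeration of the rationals. For $\alpha<\beta$ it builds a Strong-Equity-increasing sequence $z_k\to x_\beta$ with $W(z_0)>W(y_\alpha)$. Continuity at $x_\beta$ then gives $W(y_\alpha)<W(x_\beta)$. The contradiction comes from the uncountable family of pairwise disjoint intervals $(W(x_\alpha),W(y_\alpha))$, that is, from the separability of $\R$. This is the same disjoint-intervals template used in Propositions \ref{P21} and \ref{P2}.

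\textbf{Your route.} You work at a single point. The inequality $W(y_{\alpha\setminus\{t_k\}})<W(x_\alpha)<W(y_\alpha)$ from Proposition \ref{P21} holds with the \emph{same} $x_\alpha$ for every $k$. Moreover, $y_{\alpha\setminus\{t_k\}}\to y_\alpha$, since the two streams differ only at coordinates $2t_k+1,2t_k+2$. So continuity fails at $y_\alpha$ directly.

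\textbf{Comparison.} Your argument avoids the rational enumeration and the cardinality step, and it exhibits explicit points of discontinuity. Like the paper's proof, it never uses the codomain $[0,1]$. The two arguments also rule out complementary halves of continuity. The paper's step $W(x_\beta)\ge\lim_j W(z_j)\ge W(z_k)$ needs only upper semicontinuity at the points $x_\beta$. Your step needs only lower semicontinuity at $y_\alpha$, so you show that $W$ fails to be lower semicontinuous at every $y_\alpha$.
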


\begin{proof}
Let $(q_n)_{n\geq 1}$ be an enumeration of all rational numbers in $(0,1)$.
Fix $\alpha\in(0,1)$ and define
\[
L(\alpha)=\{m:q_m<\alpha\},\; \text{and}\; 
U(\alpha)=\{m:q_m\geq\alpha\}.
\]
Define the sequence $x_\alpha$ and $y_{\alpha}$ by
\[
x_\alpha(m)=
\begin{cases}
c, & m=1,\\[1mm]
f, & m=2,\\[1mm]
a, & m=2n+1,\quad n\in U(\alpha),\\[1mm]
h, & m=2n+2,\quad n\in U(\alpha),\\[1mm]
b, & m=2n+1,\quad n\in L(\alpha),\\[1mm]
g, & m=2n+2,\quad n\in L(\alpha), 
\end{cases}
\]
and
\[
y_\alpha(m)=
\begin{cases}
d, & m=1,\\
e, & m=2,\\
x_\alpha(m), & m>2,
\end{cases}
\]
respectively.
Note that
\[
x_\alpha(1)=c
<
y_\alpha(1)=d
<
y_\alpha(2)=e
<
x_\alpha(2)=f.
\]
By SE,
\[
y_\alpha\succ x_\alpha, \; \text{or}\; 
W(y_\alpha)>W(x_\alpha).
\]
Now pick one
\[
\bar m\in U(\alpha)
\]
and switch the pair $(a,h)$ with the pair $(b,g)$ at coordinates $2\bar m+1$, and $2\bar m +2$ to  obtaining a sequence $z$. Then
\[
W(z)>W(y_\alpha).
\]
Indeed,
\[
\begin{array}{c|cc|cc}
&1&2&2\bar m+1&2\bar m+2\\ \hline
x_\alpha&c&f&a&h\\[1mm]
z&c&f&b&g\\[1mm]
y_\alpha&d&e&a&h
\end{array}
\]
We have
\[
y_\alpha(1)=d
>
z(1)=c
>
z(2\bar m+1)=b
>
y_\alpha(2\bar m+1)=a,
\]
and
\[
y_\alpha(2)=e
<
z(2)=f
<
z(2\bar m+2)=g
<
y_\alpha(2\bar m+2)=h.
\]
Both pairs are ranked by SE, and therefore
\[
z\succ y_\alpha,\; \text{or}\; 
W(z)>W(y_\alpha).
\]
Now let $\beta>\alpha$ and define
\[
I=L(\beta)\setminus L(\alpha)
=\{m_0,m_1,m_2,\ldots\}.
\]
Observe that $I\subseteq U(\alpha)$.
Let $z_0$ be obtained from $x_\alpha$ by switching the pair $(a,h)$ with
$(b,g)$ at coordinates $2m_0+1$, and $2m_0+2$.
By the preceding argument,
\[
W(z_0)>W(y_\alpha)>W(x_\alpha).
\]
Inductively, define $z_{k+1}$ from $z_k$ by switching the pair $(a,h)$
with $(b,g)$ at coordinates $2m_{k+1}+1$, and $2m_{k+1}+2$.
Then, by SE, $W(z_{k+1})>W(z_k)$.
Hence we obtain
\[
W(x_\alpha)
<
W(y_\alpha)
<
W(z_0)
<
W(z_1)
<
\cdots
<
W(z_n)
<
W(z_{n+1})
<
\cdots.
\]
Moreover, in the standard Baire topology,
\[
z_k\longrightarrow x_\beta.
\]
By continuity, for every $k\geq 0$,
\[
W(x_\beta)
=
\lim_{j\to\infty}W(z_j)
\geq
W(z_k)
>
W(y_\alpha).
\]
Thus, for each $\alpha\in(0,1)$, define the nonempty open interval
\[
\mathbb{I}_\alpha=(W(x_\alpha),W(y_\alpha)).
\]
For $\alpha<\beta$, we have
\[
W(y_\alpha)<W(x_\beta),
\]
and therefore
\[
\mathbb{I}_\alpha\cap \mathbb{I}_\beta=\varnothing.
\]
Moreover,
\[
r\in \mathbb{I}_\alpha,\;
s\in \mathbb{I}_\beta
\quad\Longrightarrow\quad
r<s.
\]
This yields uncountably many pairwise disjoint open intervals in
$\mathbb{R}$, which is impossible since every open interval contains a
rational number and the rationals are countable.
Hence $W$ cannot be continuous with respect to the Baire topology on
$Y^\omega$.
\end{proof}

Next we consider the domain $Y=[0,1]$.
On the domain \(Y=[0,1]\), no real-valued SWF can satisfy SE as we show in the proposition below.
The proof uses a standard separability argument for the real line: we construct uncountably many pairwise disjoint non-empty open intervals, that contradicts the  countability of the rationals.
This impossibility sharpens earlier impossibility results that combine equity with efficiency. 
For example, \citet[Corollary 1]{sakamoto2012} rules out real-valued aggregation on domains containing \([0,1]\) when a related equity requirement, Strong Equity Preference, is combined with Weak Dominance. 
The result here shows that, on \(Y=[0,1]\), Strong Equity alone already rules out a real-valued SWF. 
Thus the obstruction comes from real-valued representability of Strong Equity itself, not from its combination with an efficiency axiom.

\begin{proposition}
\label{P2}
If $Y=[0,1]$, and $X= Y^{\N}$, then there does not exist any social welfare function $W: X\to\R$ satisfying Strong Equity.
\end{proposition}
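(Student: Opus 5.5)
We argue by contradiction and use the separability argument sketched before the statement. Suppose $W:X\to\R$ satisfies Strong Equity on $Y=[0,1]$. For each real parameter $r$ in an uncountable index set we will build a pair of streams $x^r, y^r$ that differ in exactly two generations and form a Strong Equity comparison, so that $W(y^r)<W(x^r)$. This yields the non-empty open interval $\I_r=(W(y^r),W(x^r))$. We then show that these intervals are pairwise disjoint. Each interval contains a rational number, and the rationals are countable, so this is a contradiction.

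The construction is the simplest one: all streams agree from generation $3$ onward, say they are constant equal to $0$, so only the first two coordinates matter. We will try to index the pairs by $r\in(1/4,3/4)$, taking for instance
\[
y^r=(r-\varepsilon_r,\; 1-r+\varepsilon_r,0,0,\ldots),\qquad x^r=(r,\;1-r,0,0,\ldots).
\]
This one-parameter family does not obviously give disjointness, so we will instead use a two-level scheme. We fix the first generation as the ``poor'' slot and the second as the ``rich'' slot. For $r\in(0,1/3)$ we set
\[
y^r=(r,\;1-r,0,\ldots),\qquad x^r=(2r',\;1-2r',0,\ldots),
\]
with values chosen so that $y^r_2>x^r_2>x^r_1>y^r_1$. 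The key requirement is that for $r<s$ the better stream of the $r$-pair is strictly Strong Equity-dominated by the worse stream of the $s$-pair. In symbols, $W(x^r)<W(y^s)$, obtained through one Strong Equity step, or through a short chain of such steps combined with transitivity of $<$ on $\R$.

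Concretely, we take for each $r\in(0,1/4)$ the pair $y^r=(r,\,1-r)$ and $x^r=(r+\eta,\,1-r-\eta)$ in the first two coordinates. The shift $\eta$ has to be smaller than $s-r$ for every $s>r$, which cannot hold uniformly. The remedy, mirroring the preceding proposition, is to use a third and fourth generation as a ``ladder''. Set
\[
x^r=(c_r,\;f_r,\;b,\;g,0,\ldots),\qquad y^r=(d_r,\;e_r,\;b,\;g,0,\ldots),
\]
where $c_r<d_r<e_r<f_r$ are chosen continuously and increasingly in $r$, with $c_s>d_r$ and $f_s<e_r$ for $s>r$. For example, take $c_r=r$, $d_r=r+\tfrac{1}{100}$, $e_r=1-r-\tfrac{1}{100}$, $f_r=1-r$ on a small range of $r$. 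Then $W(x^r)<W(y^r)$ by Strong Equity. Moreover, for $r<s$ we pass from $y^r$ to $x^s$ by two Strong Equity moves that use the spare coordinates, exactly as in the preceding proposition: first raise the poor coordinate from $d_r$ toward $c_s$ while lowering a richer spare coordinate, then restore via the rich side. This gives $W(y^r)<W(x^s)$. Since $(a,h)$ to $(b,g)$ swaps were used there, we will let the spare coordinates in $y^r$ carry $(a,h)=(0,1)$ and in $x^s$ carry $(b,g)$ close to them, with every inequality checked against the four-term pattern in Definition \ref{D4}.

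The main obstacle is arranging the strict inequalities so that every comparison between $y^r$ and $x^s$ with $r<s$ decomposes into genuine Strong Equity steps. Each step must have the poor generation staying poor and the rich generation staying rich, with strict betweenness, and all parameters must lie in $[0,1]$ uniformly in $r$. The continuum $[0,1]$ is exactly what makes uncountably many such nested level choices possible. Once the ordering $W(x^r)<W(y^r)<W(x^s)<W(y^s)$ for $r<s$ is established, the intervals $\I_r$ are pairwise disjoint and ordered. Choosing the least-indexed rational in each interval under a fixed enumeration of $\mathbb Q$ then gives an injection of an uncountable set into $\N$, which is impossible.
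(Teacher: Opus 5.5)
There is a genuine gap, and it is exactly the step you call the main obstacle: the inequality $W(y^r)<W(x^s)$ for $r<s$ is never established, and in your scheme it cannot be derived from Strong Equity steps at all. Your requirement $c_s>d_r$ for all $s>r$, together with $c_r<d_r$, would make the intervals $(c_r,d_r)\subseteq[0,1]$ non-empty and pairwise disjoint for uncountably many $r$. That is the very separability contradiction you are aiming at, now inside $Y$, so no choice of levels satisfies it. Your explicit choice $c_r=r$, $d_r=r+\frac{1}{100}$, $e_r=1-r-\frac{1}{100}$, $f_r=1-r$ violates it, and the failure is not just a missing check. For $r<s<r+\frac{1}{100}$ the streams $y^r$ and $x^s$ differ only in generations $1,2$, with $1-s>1-r-\frac{1}{100}>r+\frac{1}{100}>s$, so Strong Equity gives $W(x^s)<W(y^r)$. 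Since also $W(x^r)<W(x^s)$, the intervals $(W(x^r),W(y^r))$ and $(W(x^s),W(y^s))$ overlap. The spare generations cannot repair this, because they carry the same values $(b,g)$ in $y^r$ and $x^s$. Take the threshold-lexicographic order from the proof of Theorem~\ref{T1}, built on the finitely many values occurring in a putative chain. Every Strong Equity step strictly raises it, yet it ranks $y^r$ above $x^s$ whenever $c_s<d_r$, since the first code difference is at threshold $c_s$, generation $1$. Hence no finite chain of Strong Equity steps leads from $y^r$ to $x^s$. The variant with $(0,1)$ in the spare slots of $y^r$ and $(b,g)$ in those of $x^s$ fails as well. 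Then $x^r$ and $y^r$ differ in four generations with the spare pair pushed outward, and the same order shows that no chain of Strong Equity steps even leads from $x^r$ to $y^r$.

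What is missing is a second pair of generations whose two levels are common to all $r$ and lie strictly between all the $r$-levels, so that moving from $r$ to $r'>r$ is paid for by cross-pairing. This is the paper's device. Let $z(r,s)=(r,1-r,s,1-s,0,0,\ldots)$ for $r\in(0,\frac14)$ and $s\in(\frac13,\frac25)$, fix $s_0<s_1$, and put $x^r=z(r,s_0)$ and $y^r=z(r,s_1)$. These differ only in generations $3,4$, where $1-s_0>1-s_1>s_1>s_0$, so $W(x^r)<W(y^r)$. For $r<r'$, you reach $x^{r'}$ from $y^r$ by two Strong Equity steps, first on generations $1,3$ and then on generations $2,4$, justified respectively by
\[
s_1>s_0>r'>r
\qquad\text{and}\qquad
1-r>1-r'>1-s_0>1-s_1 .
\]
Hence $W(x^r)<W(y^r)<W(x^{r'})<W(y^{r'})$, and your final counting argument goes through. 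In effect $W\circ z$ is strictly increasing for the lexicographic order on the rectangle. The uncountably many disjoint intervals come from the second coordinate of that order, not from nested levels within a single pair, which $[0,1]$ cannot supply.
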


\begin{proof}
Let $I\subseteq Y$ be a non-degenerate interval.  
Choose an increasing affine map $A:(0,1)\to I$ and fix $\bar y\in Y$.  
Suppose, to the contrary, that $W:Y^\N\to\R$ satisfies Strong Equity.  
For $0<r<\frac14,\; \frac13<s<\frac25$, define
\[
z(r, s)=(A(r), A(1-r), A(s), A(1-s), \bar y, \bar y, \cdots)\in Y^\N.
\]
We claim that, whenever $(r,s)<_{\mathrm{lex}}(r^{\prime},s^{\prime})$, the stream $z(r^{\prime},s^{\prime})$ can be obtained from $z(r,s)$ by finitely many Strong Equity improvements. 
Consequently, $W(z(r^{\prime},s^{\prime}))>W(z(r,s))$.
We need to consider three cases.

\noindent\textbf{Case 1.} $r=r^{\prime}$ and $s<s^{\prime}$.  
Since $s<s^{\prime}<1-s^{\prime}<1-s$, replacing the pair $(s,1-s)$ by $(s^{\prime},1-s^{\prime})$ is a Strong Equity improvement.  
Thus, $z(r,s^{\prime})\succ z(r,s)$.

\noindent\textbf{Case 2.} $r<r^{\prime}$ and $s\leq s^{\prime}$.  
Replacing the pair $(r,1-r)$ by $(r^{\prime},1-r^{\prime})$ is a Strong Equity improvement because $1-r>1-r^{\prime}>r^{\prime}>r$.
If $s<s^{\prime}$, replacing $(s,1-s)$ by $(s^{\prime},1-s^{\prime})$ is also a Strong Equity improvement, since $1-s>1-s^{\prime}>s^{\prime}>s$.
If $s=s^{\prime}$, no second step is needed.  
Hence $z(r^{\prime},s^{\prime})\succ z(r,s)$.

\noindent\textbf{Case 3.} $r<r^{\prime}$ and $s^{\prime}<s$.  
Then $r<r^{\prime}<s^{\prime}<s<1-s<1-s^{\prime}<1-r^{\prime}<1-r$.
We replace the pair $(r,s)$ by $(r^{\prime},s^{\prime})$ first.  
This is a Strong Equity improvement because $s>s^{\prime}>r^{\prime}>r$.
Next we replace the pair $(1-s,1-r)$ by $(1-s^{\prime},1-r^{\prime})$.  
This is also a Strong Equity improvement because $1-r>1-r^{\prime}>1-s^{\prime}>1-s$.
After these two Strong Equity improvements, the stream $z(r,s)$ has become $z(r^{\prime},s^{\prime})$.  
This proves the claim.

Now we define  $F(r,s)=W(z(r,s))$, for $(r,s)\in \left(0,\frac14\right)\times\left(\frac13,\frac25\right)$.  
The claim gives
\[
(r,s) <_{\lex}(r^{\prime},s^{\prime})\;\Longrightarrow\; F(r,s) < F(r^{\prime},s^{\prime}).
\]
Choose $s_0,s_1\in \left(\frac13,\frac25\right)$ with $s_0<s_1$. 
For each $r\in\left(0,\frac14\right)$, the interval $I_r=(F(r,s_0), F(r,s_1))$ is a non-empty open interval in $\R$.  
If $r<r^{\prime}$, then $(r,s_1)<_{\lex}(r^{\prime}, s_0)$, and hence $F(r,s_1)<F(r^{\prime},s_0)$.
Therefore the family $\left\{I_r : r\in\left(0,\frac14\right)\right\}$ is an uncountable family of pairwise disjoint non-empty open intervals in $\R$.  
This is impossible, since each non-empty open interval in $\R$ contains a rational number, distinct disjoint intervals contain distinct rationals and there are only countably many rational numbers.

Thus no SWF on $Y^\N$ satisfies Strong Equity whenever $Y=[0,1]$.
\end{proof}

\subsection{Construction}

\begin{proposition}
\label{P3}
\emph{Let $Y\subseteq\R$ be non-empty and let $X\equiv Y^\N$.
Let
\[
L(Y)\equiv \{a\in Y: \exists\; b , c, d \in Y \text{ with } a<b<c<d\},\; \text{and}
\]
\[
U(Y)\equiv \{d\in Y: \exists\;  a, b, c \in Y \text{ with } a<b<c<d\}.%
\footnote{Note that \(L(Y)\) is the set of utility levels that can occur as the pre-transfer poor generation and \(U(Y)\) is the set of utility levels that can occur as the pre-transfer rich generation in a Strong Equity comparison.}
\]
\[
\text{If either}\; L(Y)\;\text{is well-ordered by the usual order}\;\text{or}\; U(Y)\; \text{ is reverse well-ordered},
\]
then there exists an explicitly defined SWO on $X$ satisfying Strong Equity.}
\end{proposition}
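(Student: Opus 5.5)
The plan is to adapt the threshold-code lexicographic construction of Theorem \ref{T1}, with the thresholds restricted to $L(Y)$ in the first case, and to handle the second case by an order-reversing symmetry. Suppose first that $L(Y)$ is well-ordered by $<$. Index the codes by the set $L(Y)\times\N$ with the lexicographic order (threshold first, generation second); this set is well-ordered. For $t\in L(Y)$ and $n\in\N$ put $B_{t,n}(x)=1$ if $x_n\le t$ and $0$ otherwise, exactly as in Theorem \ref{T1}.

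\textbf{Handling identical codes.} Two streams may now have identical codes without being equal. This happens when they differ only in coordinates where both values lie above every element of $L(Y)$, or where both values are at most every threshold in some way that the codes do not separate. So the codes alone give only a preorder. I will complete it by a secondary explicit tie-break. Whenever all codes of $x$ and $y$ agree, compare them by the ordinary lexicographic order on generations: find the first $n$ with $x_n\ne y_n$ and prefer the stream with the larger value there. This tie-break is explicit and needs no choice. Completeness and transitivity then follow because the relation is a lexicographic composition of two total preorders, the second being a linear order.

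\textbf{Verifying Strong Equity in the first case.} Take $y_j>x_j>x_i>y_i$ with all other coordinates equal. Then $y_i\in L(Y)$, witnessed by $x_i<x_j<y_j$. Consider any threshold $s<y_i$. All four values exceed $s$, so the codes agree at $(s,i)$ and $(s,j)$, and they agree trivially at every other generation. At the threshold $y_i$ we have $B_{y_i,i}(x)=0$ and $B_{y_i,i}(y)=1$, while $B_{y_i,j}$ equals $0$ in both streams. Coordinates other than $i$ and $j$ agree. Hence the first code difference lies at $(y_i,i)$ and favours $x$, so $x\succ y$ strictly at the code stage and the tie-break is never reached.

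\textbf{The second case.} Suppose $U(Y)$ is reverse well-ordered. I will use upper-threshold codes $D_{t,n}(x)=1$ if $x_n\ge t$, for $t\in U(Y)$. Order the index set by descending $t$ first and generation second; reverse well-orderedness makes this a well-order. A code value of $0$ ranks above a code value of $1$, and ties are broken as before. In a Strong Equity comparison $y_j\in U(Y)$, and every threshold $s>y_j$ sees all four values below it, so the codes agree there. At the threshold $y_j$ we get $D_{y_j,j}(y)=1$, $D_{y_j,j}(x)=0$ and $D_{y_j,i}=0$ in both streams. So the first difference occurs at $(y_j,j)$ and favours $x$.

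\textbf{Main obstacle.} The delicate step is confirming that the combined relation is genuinely transitive. In particular, the tie-break must only be invoked on pairs whose entire code arrays coincide, so that code-equivalence is an equivalence relation and the two-level lexicographic comparison is well defined. One must also check that "first difference" exists, which is where well-orderedness of the index set is used. Note that Strong Pareto is not claimed, so it need not survive the coarser threshold set.
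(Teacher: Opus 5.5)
Your proposal is correct and follows the paper's proof essentially verbatim. It uses the same threshold codes indexed by $L(Y)\times\N$ (respectively $U(Y)\times\N$ with thresholds taken in descending order), the same lexicographic ranking of codes, and the same first-differing-generation tie-break when all codes agree. Your upper-threshold codes $D_{t,n}=1-C_{t,n}$, with $0$ ranked above $1$, induce exactly the paper's order in the second case, where the $C$-codes are ranked with $1$ above $0$.
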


\begin{proof}
We prove the two cases separately.
Observe that $|Y|\geq 4$ for non-trivial SE rankings.

\noindent\textbf{$L(Y)$ is well-ordered.}
For $x\in X$, $t\in L(Y)$, and $n\in\N$, we define
\[
B_{t,n}(x)=
\begin{cases}
1, & x_n\leq t,\\
0, & x_n>t.
\end{cases}
\]
We order $L(Y)\times\N$ lexicographically, threshold utility $t$ first and generation $n$ second.  
Since $L(Y)$ is well-ordered, this index set is well-ordered.  
We rank utility streams lexicographically by their $B$-codes, with $0$ ranked above $1$.  
If all $B$-codes agree, we complete the ranking as follows.
If \(x=y\), declare \(x\sim y\).
If \(x\neq y\), let \(n\) be the first generation for which \(x_n\neq y_n\). 
We rank \(x\succ y\) if \(x_n>y_n\), and rank \(y\succ x\) otherwise.
This rule is complete and transitive.  
Indeed, either there is a first $B$-code difference, or all $B$-codes agree and there is a first generation at which the streams differ, or the streams are identical.  
Transitivity follows from the standard transitivity of lexicographic ranking over a well-ordered index set, followed by a lexicographic tie-breaker.

Now let $x,y\in X$ be such that there exist $i,j\in\N$ with $y_j>x_j>x_i>y_i$ and $x_k=y_k$ for every $k\in\N\setminus\{i,j\}$.
Observe that $y_i\in L(Y)$.  
For every $s\in L(Y)$ with $s<y_i$, both affected coordinates are above $s$ in $x$ and $y$, so no $B$-code difference appears before the threshold utility $y_i$.  
At threshold utility $y_i$, coordinate $i$ gives
\[
B_{y_i,i}(x)=0,\; B_{y_i,i}(y)=1.
\]
At the same threshold utility $y_i$, coordinate $j$ gives
\[
B_{y_i,j}(x)=B_{y_i,j}(y)=0.
\]
All other coordinates are identical in the two streams.  
Hence the constructed order ranks $x\succ y$, as required by SE.

\noindent\textbf{$U(Y)$ is reverse well-ordered.}
For $x\in X$, $t\in U(Y)$, and $n\in\N$, define
\[
C_{t,n}(x)=
\begin{cases}
1, & x_n<t,\\
0, & x_n\geq t.
\end{cases}
\]
We order $U(Y)\times\N$ lexicographically with threshold utilities ordered from larger to smaller, and generations ordered increasingly.  
Since $U(Y)$ is reverse well-ordered, this index set is well-ordered.  
We rank utility streams lexicographically by their $C$-codes, with $1$ ranked above $0$.  
If all $C$-codes agree, we complete the ranking as follows.
If \(x=y\), declare \(x\sim y\).
If \(x\neq y\), let \(n\) be the first generation for which \(x_n\neq y_n\). 
We rank \(x\succ y\) if \(x_n>y_n\), and rank \(y\succ x\) otherwise.
The same lexicographic argument gives a complete and transitive order.  
Now let $x,y\in X$ be such that there exist $i,j\in\N$ with $y_j>x_j>x_i>y_i$ and $x_k=y_k$ for every $k\in\N\setminus\{i,j\}$.  
Observe that $y_j\in U(Y)$.  
For every $s\in U(Y)$ with $s>y_j$, both affected coordinates are below $s$ in $x$ and $y$, so no $C$-code difference appears before threshold utility $y_j$.  
At threshold utility $y_j$, coordinate $j$ gives
\[
C_{y_j,j}(x)=1,\; C_{y_j,j}(y)=0.
\]
At the same threshold, coordinate $i$ gives
\[
C_{y_j,i}(x)=C_{y_j,i}(y)=1.
\]
All other coordinates are identical. 
Hence the constructed order ranks $x\succ y$, as required by SE.
\end{proof}

\subsection{Non-constructive SWO}
\label{S1}
In this sub-section we show that, for any domain of order type $\mu$, the existence of an SWO satisfying SE implies the existence of a non-Ramsey set.

\begin{proposition}
\label{P4}
\emph{Let $Y\subseteq\R$ be of order type $\mu$, and  let $X\equiv Y^\N$.  
If there exists a social welfare order on $X$ satisfying Strong Equity, then there exists a non-Ramsey collection $\OmegaC\subseteq\calS$.}
\end{proposition}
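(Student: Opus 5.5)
The plan is to reduce to an integer-like subdomain and then follow the non-Ramsey argument of Proposition \ref{P21}, with the strict preferences of the SWO taking the place of the real values of $W$. The main change is to avoid the rational-enumeration device, which needs a representation. Instead I will use the Lauwers-style dichotomy already used in Theorem \ref{T2}: comparing $\Phi$-type streams built from $S_1$ and $S_2$, and producing the non-Ramsey collection by dropping elements.

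\textbf{Step 1 (choosing values).} Since $Y$ has order type $\mu$, it contains a subset $Y'$ similar to the integers. Any SWO on $Y^\N$ restricts to an SWO on $Y'^\N$, and Strong Equity is ordinal, so I may relabel $Y'$ as $\Z$. In $\Z$ I will pick values
\[
\cdots<a_2<a_1<b<c<d<e<f<g_1<g_2<\cdots .
\]
Strong Equity alone has no monotonicity, so every comparison has to be manufactured from two-coordinate equalizing moves. The point of using integers is that both tails are unbounded, which lets me move "poor" levels downward and "rich" levels upward indefinitely.

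\textbf{Step 2 (streams and key comparisons).} Partition $\N$ into blocks $C_m$ with $|C_m|=2^{m}$. For $A\subseteq\N$ define $\Psi(A)$ as follows: on block $C_m$, half the coordinates carry $b$ and half carry $f$ when $m\in A$; when $m\notin A$, they carry $a_m$ and $g_m$ instead.

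Passing from the pair $(a_m,g_m)$ to the pair $(b,f)$ is a Strong Equity improvement, since $a_m<b<f<g_m$. Hence $\Psi(B)\succ\Psi(A)$ whenever $A\subsetneq B$ and $B\setminus A$ is finite.

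This replaces Lemma \ref{L1}, but only for finite differences. For the analogue of Lemma \ref{L2} I will cross-pair coordinates. Suppose $A\setminus B$ is finite and some $m\in B\setminus A$ exceeds $\max(A\setminus B)$. Then I lower finitely many $b$'s (on blocks $\ell\in A\setminus B$) to $a_\ell$ while raising $a_m$'s to $b$, and similarly on the rich side. Because $|C_m|$ dominates $\sum_{\ell<m}|C_\ell|$, each step can be arranged as a genuine Strong Equity move. The spacing of $a_\ell$ and $a_m$ must be chosen so that the four-term inequality $y_j>x_j>x_i>y_i$ holds in each move; this is where the two-sided infinitude of $\Z$ is used.

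\textbf{Step 3 (non-Ramsey collection).} Define $S\in\Omega$ iff $\Psi(S_1)\succ\Psi(S_2)$, and run the two-case argument from the proof of Theorem \ref{T2}.

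\textbf{Main obstacle.} The step I expect to fail as stated is the weak comparisons $\Phi(S_2)\succeq\Phi(T_1)$ with $T_1\subseteq S_2$ and $S_2\setminus T_1$ infinite, since Lemma \ref{L1} used monotonicity. I would try to redesign Step 3 so that every comparison differs only by a finite set plus a cofinal "reservoir" block. That means replacing the weak comparisons by strict Lemma-\ref{L2}-type comparisons: choose $T$ so that every symmetric difference needed is finite on the losing side and unbounded on the gaining side, as in case 2 of Theorem \ref{T2}.

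If that fails, the fallback is to imitate Proposition \ref{P21} directly. Define $\Omega=\{\alpha : x_\alpha\succ y_{\alpha\setminus\{n\}} \text{ for all } n\in\alpha\}$ and use well-foundedness. The difficulty there is that $\succ$ on an SWO is not well-founded, so some order-type bound, using the integer structure, would have to replace the rational enumeration.
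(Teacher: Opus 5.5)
\textbf{The decisive gap is in Step 3.} Strong Equity only compares streams that differ in exactly two coordinates, so chains of SE improvements never leave an eventual-equality class. Combining a linear extension inside each class (as in Proposition \ref{P1}) with an arbitrary ordering of the classes gives SE-orders that rank any two streams from different classes either way. Consequently your Lemma-\ref{L2} analogue can hold only when $B\setminus A$ is also finite, and the two-case argument of Theorem \ref{T2} cannot be rerun.

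Its first case is harmless: there $S_2\setminus T_1=I_1$ is finite, so your pure moves $(a_m,g_m)\mapsto(b,f)$ give $\Psi(S_2)\succ\Psi(T_1)$. Its second case, however, needs two comparisons across infinite differences:
\begin{itemize}
\item $\Phi(S_1)\succeq\Phi(T_2)$, where $S_1\setminus T_2=I_4\cup I_8\cup\cdots$;
\item Lemma \ref{L2} applied with $T_1\setminus S_2=I_4\cup I_8\cup\cdots$.
\end{itemize}
Your redesign (finite losing side, \emph{unbounded} gaining side) has exactly the same defect. The well-foundedness fallback has no substitute for the rational enumeration.

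The missing idea is to make the case hypothesis supply the only cross-class comparison, with every other comparison a finite-difference one. The paper sets $x(T)=F(O(T))$ and $y(T)=F(E(T))$, the odd and even unions of $I_k(T)=[t_k,t_{k+1})$, and $\OmegaC=\{T: y(T)\succ x(T)\}$.
\begin{itemize}
\item If $T\in\OmegaC$, dropping $t_1$ gives $x(S)=y(T)$ exactly, and $x(T)\succ y(S)$ with difference $I_1(T)$.
\item If $T\notin\OmegaC$, it drops $t_1$ together with a \emph{finite} run $t_{p+1},\ldots,t_{q-1}$ ($p\ge 3$ odd, $q$ even and large). Then $O(T)\setminus E(S)=I_1(T)$, $E(S)\setminus O(T)=E(T)\setminus O(S)=M$ and $O(S)\setminus E(T)=\varnothing$. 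Here $M=I_{p+1}(T)\cup I_{p+3}(T)\cup\cdots\cup I_{q-2}(T)$ is finite, lies above $I_1(T)$, and satisfies $|M|\ge|I_1(T)|$. Lemma \ref{L4} then gives $y(S)\succ x(T)\succeq y(T)\succ x(S)$.
\end{itemize}

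Step 2 also needs repair. Your move lowers $b$ to $a_\ell$ at $p\in C_\ell$ while raising $a_m$ to $b$ at $q\in C_m$. This is not an SE move: afterwards the formerly poor coordinate sits at $b>a_\ell$, above the formerly rich one. No spacing of the $a$'s helps, since $b$ is common to all blocks.

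It can be fixed by treating a $(b,f)$ pair at $p,p'\in C_\ell$ together with an $(a_m,g_m)$ pair at $q,q'\in C_m$. Use a level $a'$ strictly between $a_m$ and $a_\ell$ and a level $g'$ strictly between $g_\ell$ and $g_m$, so successive $a_k$'s, and successive $g_k$'s, must not be adjacent integers. The following three steps are genuine SE improvements:
\begin{itemize}
\item $q\colon a_m\mapsto a'$ together with $p\colon b\mapsto a_\ell$;
\item $p'\colon f\mapsto g_\ell$ together with $q'\colon g_m\mapsto g'$;
\item $q\colon a'\mapsto b$ together with $q'\colon g'\mapsto f$.
\end{itemize}
The paper avoids this by using nested block-specific pairs $A_n=(z_{-2n},z_{2n+1})$ and $B_n=(z_{-2n+1},z_{2n})$. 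For $n<m$, the replacement $(B_n,A_m)\mapsto(A_n,B_m)$ is then one SE move on the low coordinates and one on the high coordinates.

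With either device and the finite reservoir above, your $\Psi$-scheme would go through. Your geometric block sizes even make a single block of $M$ above $\max I_1(T)$ enough.
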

\noindent The following lemma is useful in the proof of the proposition.

\begin{lemma}\label{L4}
\emph{Let \(z=(z_k)_{k\in\Z}\) be strictly increasing.
For every \(n\in\N\), define the two-coordinate blocks $A_n=(z_{-2n},z_{2n+1})$, $B_n=(z_{-2n+1},z_{2n})$.
Let \(F(R)\in \{z_k:k\in\Z\}^{\N}\) be the infinite utility sequence obtained from \(R\subseteq\N\) by assigning block \(B_n\) to coordinates \((2n-1,2n)\) if \(n\in R\), and block \(A_n\) to coordinates \((2n-1,2n)\) if \(n\notin R\).
Let \(R,Q\subseteq\N\). If \(Q\setminus R\) and \(R\setminus Q\) are finite, \(Q\neq R\), and there exists an injection $\varphi:R\setminus Q\to Q\setminus R$ such that \(\varphi(n)>n\) for every \(n\in R\setminus Q\),%
\footnote{For example, take \(z_k=k\). 
Let $R=\{1,5,7,9,\cdots\}$, $Q=\{3,5,7,9,\cdots\}$.
Then \(R\setminus Q=\{1\}\), \(Q\setminus R=\{3\}\), and the map \(\varphi(1)=3\) satisfies \(\varphi(1)>1\). 
The two streams agree on all blocks except those indexed by \(1\) and \(3\). 
In block notation, $F(R)=(B_1,A_2,A_3,A_4,B_5,A_6,B_7,\cdots)$, whereas $F(Q)=(A_1,A_2,B_3,A_4,B_5,A_6,B_7,\cdots)$.
Equivalently, $F(R)=(-1,2,-4,5,-6,7,\cdots)$, while $F(Q)=(-2,3,-4,5,-5,6,\cdots)$.
Here \(A_1=(-2,3)\), \(B_1=(-1,2)\), \(A_3=(-6,7)\), and \(B_3=(-5,6)\).}
then every SWO satisfying Strong Equity ranks \(F(Q)\succ F(R)\).}
\end{lemma}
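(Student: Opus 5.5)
The plan is to move from \(F(R)\) to \(F(Q)\) by a finite chain of one-step Strong Equity improvements, each strictly preferred by any SWO satisfying Strong Equity, and then to conclude by transitivity. The streams \(F(R)\) and \(F(Q)\) agree on every block index outside the finite set \((R\setminus Q)\cup(Q\setminus R)\). So only finitely many coordinates need to change, and every intermediate stream still lies in \(\{z_k:k\in\Z\}^{\N}\).

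\textbf{Step 1 (matched pairs).} Fix \(n\in R\setminus Q\) and let \(m=\varphi(n)\in Q\setminus R\), so \(m\geq n+1\). In \(F(R)\), block \(n\) carries \(B_n=(z_{-2n+1},z_{2n})\) and block \(m\) carries \(A_m=(z_{-2m},z_{2m+1})\). In \(F(Q)\) these blocks carry \(A_n=(z_{-2n},z_{2n+1})\) and \(B_m=(z_{-2m+1},z_{2m})\). I will split this change into two Strong Equity steps.
\begin{itemize}
\item On the two low coordinates (\(2n-1\) and \(2m-1\)), replace \((z_{-2n+1},z_{-2m})\) by \((z_{-2n},z_{-2m+1})\). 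This is a Strong Equity improvement because \(z_{-2n+1}>z_{-2n}>z_{-2m+1}>z_{-2m}\). The middle inequality holds since \(-2n>-2m+1\), which follows from \(2(m-n)\geq 2\).
\item On the two high coordinates (\(2n\) and \(2m\)), replace \((z_{2n},z_{2m+1})\) by \((z_{2n+1},z_{2m})\). This is a Strong Equity improvement because \(z_{2m+1}>z_{2m}>z_{2n+1}>z_{2n}\), and here \(2m>2n+1\).
\end{itemize}
After these two steps, blocks \(n\) and \(m\) agree with \(F(Q)\). Since \(\varphi\) is injective, distinct \(n\) use distinct blocks \(m\), so these modifications touch disjoint coordinate sets and can be carried out one after another.

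\textbf{Step 2 (unmatched blocks).} Some \(m\in Q\setminus R\) may lie outside the image of \(\varphi\), because \(\varphi\) is only assumed to be an injection. For such a block I will replace \(A_m\) by \(B_m\) within the block itself. This is a single Strong Equity improvement, since \(z_{2m+1}>z_{2m}>z_{-2m+1}>z_{-2m}\): the poor coordinate \(2m-1\) rises and the rich coordinate \(2m\) falls, and the two do not cross.

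\textbf{Step 3 (conclusion).} Since \(Q\neq R\) and both differences are finite, at least one step occurs, and only finitely many occur in total. Each step is a strict improvement under any SWO satisfying Strong Equity, so transitivity gives \(F(Q)\succ F(R)\).

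The main obstacle is arranging the pairing so that each step is genuinely of Strong Equity form. The blocks \(B_n\to A_n\) spread utilities apart, so they cannot be handled alone; they must be paired with the contracting block \(m=\varphi(n)\), and the condition \(\varphi(n)>n\) is exactly what makes the interleaved inequalities strict. The unmatched blocks in Step 2 are the one point that is easy to overlook.
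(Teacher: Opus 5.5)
Your proposal is correct and follows the paper's proof essentially step for step. The paper likewise splits each matched replacement $(B_n,A_{\varphi(n)})\mapsto(A_n,B_{\varphi(n)})$ into one Strong Equity step on the low coordinates and one on the high coordinates (using $\varphi(n)>n$ for the strict interleaving), treats the remaining elements of $Q\setminus R$ by the single-block improvement $A_m\mapsto B_m$, and concludes by transitivity along the finite chain.
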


\begin{proof}
First observe that, for each $n\in\N$, $z_{-2n}<z_{-2n+1}<z_{2n}<z_{2n+1}$.
Hence replacing block $A_n=(z_{-2n},z_{2n+1})$ by block $B_n=(z_{-2n+1},z_{2n})$ is one Strong Equity improvement.
Next fix $n<m$.  
Consider the two-block replacement $(B_n, A_m) = (z_{-2n+1},z_{2n}), (z_{-2m},z_{2m+1})$ $\longmapsto$ $(z_{-2n}, z_{2n+1}),  (z_{-2m+1} ,z_{2m})= (A_n, B_m)$.
This replacement is a finite chain of two Strong Equity improvements.  
Note that $(z_{-2n+1},z_{-2m})$ is replaced by $(z_{-2n},z_{-2m+1})$.
Since $n<m$, $z_{-2n+1}>z_{-2n}>z_{-2m+1}>z_{-2m}$, so this is a Strong Equity improvement.  
The remaining pair of coordinates $(z_{2n},z_{2m+1})$ are replaced by $(z_{2n+1},z_{2m})$.
Since $n<m$, $z_{2m+1}>z_{2m}>z_{2n+1}>z_{2n}$, so this is also a Strong Equity improvement.  
Thus the two-block replacement $B_n, A_m\mapsto A_n,B_m$ is forced by Strong Equity and transitivity.

Now compare $F(R)$ and $F(Q)$.  
For every $n\in R\setminus Q$, pair the loss of block $B_n$ with the gain of block $B_{\varphi(n)}$, where $\varphi(n)\in Q\setminus R$ and $\varphi(n)>n$.  
The preceding paragraph shows that each paired replacement is a finite chain of Strong Equity improvements.  
Since $\varphi$ is injective, these paired replacements use disjoint gained blocks.  
Any remaining elements of $Q\setminus R$ are pure replacements of $A_m$ by $B_m$, and each such replacement is itself a Strong Equity improvement.  
There are only finitely many changed blocks, so applying these improvements successively yields a finite strict-improvement chain from $F(R)$ to $F(Q)$.  
Therefore $F(Q)\succ F(R)$.
\end{proof}

\begin{proof}[Proof of Proposition~\ref{P4}]
Let $\succeq$ be an SWO on $X$ satisfying Strong Equity.  
Choose a strictly increasing family $\{z_k:k\in\mathbb Z\}\subseteq Y$.  
It is enough to work with streams whose coordinates belong to $Z=\{z_k:k\in\mathbb Z\}$, because the restriction of $\succeq$ to $Z^\N$ is still complete, transitive, and  satisfies Strong Equity.
For each $n\in\N$, define the two-coordinate blocks $A_n=(z_{-2n}, z_{2n+1})$, $ B_n=(z_{-2n+1}, z_{2n})$, and define $F(R)$ for $R\subseteq\N$ as in Lemma \ref{L4}.

For every $T\in\calS$, we write $T=\{t_1<t_2<t_3<\cdots\}$, and define the finite non-empty intervals $I_k(T)=\{n\in\N:t_k\leq n<t_{k+1}\}\; (k\in\N)$.
We define the odd and even interval unions 
\[
O(T)=\bigcup_{k=0}^{\infty} I_{2k+1}(T),\;
E(T)=\bigcup_{k=1}^{\infty} I_{2k}(T).
\]
Let 
\[
x(T)=F(O(T)),\; y(T)=F(E(T)),
\]
and define
\[
\OmegaC=\{T\in\calS:y(T)\succ x(T)\}.
\]

We show that $\OmegaC$ is non-Ramsey.  
Let  $T\in\calS$ be arbitrary.
There are two cases.
\begin{enumerate}
\item
$T\in\OmegaC$.  
We drop $t_1$ to obtain $S=\{t_2,t_3,t_4,\cdots\}$.
Then $S\in\calS(T)$.  
The interval decomposition of $S$ satisfies $O(S)=E(T)$, $E(S)=O(T)\setminus I_1(T)$.
Therefore, $x(S)=F(O(S))=F(E(T))=y(T)$.
Also, $x(T)=F(O(T))$ and $y(S)=F(O(T)\setminus I_1(T))$.  
Since $I_1(T)$ is finite and non-empty, Lemma \ref{L4}, applied with $R=O(T)\setminus I_1(T)$ and $Q=O(T)$, gives $x(T)\succ y(S)$.
Thus, $x(S)=y(T)\succ x(T)\succ y(S)$, so $x(S)\succ y(S)$ by transitivity.  
Hence $S\notin\OmegaC$.  

\item
$T\notin\OmegaC$.  
Since  $\succeq$ is complete, we get $x(T)\succeq y(T)$.
Choose an odd integer $p\geq 3$. 
Choose an even integer $q>p+1$ so large that the finite set
\[
M=I_{p+1}(T)\cup I_{p+3}(T)\cup\cdots\cup I_{q-2}(T)
\]
has cardinality at least $|I_1(T)|$.  
Such a $q$ exists because every interval $I_k(T)$ is non-empty and there are arbitrarily many even indices after $p$.
Drop $t_1, t_{p+1}, \cdots, t_{q-1}$ to obtain
\[
S=\{t_2,t_3,\ldots,t_p,t_q,t_{q+1},t_{q+2},\cdots\}.
\]
Then $S\in\calS(T)$.
A direct calculation of the interval decomposition of $S$ gives
\[
O(S)=I_2(T)\cup I_4(T)\cup\cdots\cup I_{p-1}(T)
      \cup I_q(T)\cup I_{q+2}(T)\cup\cdots,\; \text{and}
\]
\[
E(S)=I_3(T)\cup I_5(T)\cup\cdots\cup I_{p-2}(T)
      \cup \bigl(I_p(T)\cup I_{p+1}(T)\cup\cdots\cup I_{q-1}(T)\bigr)
      \cup I_{q+1}(T)\cup I_{q+3}(T)\cup\cdots.
\]
Consequently,
\[
O(T)\setminus E(S)=I_1(T), \; \text{and}\; 
E(S)\setminus O(T)=M.
\]
Every element of $M$ is larger than every element of $I_1(T)$, because $p\geq 3$.  Since $|M|\geq |I_1(T)|$, there is an injection $\varphi:I_1(T)\to M$ with $\varphi(n)>n$ for all $n\in I_1(T)$.  
Lemma \ref{L4} applied with $R=O(T)$ and $Q=E(S)$ gives $y(S)=F(E(S))\succ F(O(T))=x(T)$.
Similarly, $E(T)\setminus O(S)=M$, $O(S)\setminus E(T)=\varnothing$.
Since $M$ is finite and non-empty, Lemma \ref{L4} applied with $R=O(S)$ and $Q=E(T)$ gives $y(T)=F(E(T))\succ F(O(S))=x(S)$.
Combining the three comparisons, $y(S)\succ x(T)\succeq y(T)\succ x(S)$, and by  transitivity we get $y(S)\succ x(S)$.
Thus $S\in\OmegaC$.  
\end{enumerate}

In either case, $\calS(T)$ intersects both $\OmegaC$ and $\calS\setminus\OmegaC$.  
Since $T\in\calS$ was arbitrary, $\OmegaC$ is non-Ramsey.
\end{proof}

With the results above, it is useful to summarize some features of the Strong Equity axiom with regard to the existence of SWFs, explicitly constructive SWFs, and SWOs that admit explicit descriptions. 
There are two separate sources of difficulty here. 
The first is the move from an SWO to an SWF: an SWO only has to rank pairs of utility streams, whereas an SWF must represent that ranking by real numbers. 
The second is the move from SE alone to SE together with monotonicity.

Monotonicity is a minimal efficiency requirement, but it is already highly restrictive. 
With SE+M, real-valued representation is known to face severe restrictions on sufficiently rich domains.
In contrast, an explicit SWO can still be constructed on some infinite domains, for example when \(Y\) has order type \(\omega\).
Thus, in the SE $+$ M case, the explicit-SWO result should be distinguished from real-valued SWF representation results.

For SE alone, the relation between the two results is reversed. 
Every countable \(Y\) admits an SWF satisfying SE, while the explicit SWO construction applies only to a smaller class of linearly ordered domains. 
Thus the domains covered by the explicit SWO construction form only a subclass of the countable domains covered by the SWF existence theorem. 
The countable-domain SWF theorem should therefore be read primarily as an existence result, not as an explicit construction. 
This point is especially clear on integer-like domains. 
Using the terminology of \citet{dubey2011}, these are domains of type \(\mu\): \(Y\) contains a subset that is order-isomorphic to the positive and negative integers. 
On such domains, the existence of an SWO satisfying SE entails the existence of a non-Ramsey collection. 
Hence the SWF guaranteed by the countable-domain theorem cannot be constructive on these domains; an explicit SWF would induce an explicit SWO. 
This identifies a clear nonconstructive class for SE, although it should not be read as a complete characterization of all domains on which constructive SE rankings fail.

\begin{remark}\label{R3}
\emph{\begin{itemize}
\item 
Unlike the case of SE on rich domains, there exist real-valued SWFs satisfying PD, as shown in \citet{sakamoto2012} and \citet{alcantud2010a} for $X= [0,1]^{\N}$. 
In addition, the representation satisfies  Weak Dominance.
\item
On domains containing an affine copy of the integers, the existence of an SWO satisfying PD is non-constructive in the same sense as in the SE case. 
The non-Ramsey proof need not be repeated. It is enough to verify the block dominance step. 
For PD, take an affine copy of the integers in \(Y\), say \(z_k=r+k\delta\) with \(\delta>0\), and define $A_n=(z_{-2n}, z_{2n+1})$, $B_n=(z_{-2n+1}, z_{2n})$.
Then $z_{-2n}<z_{-2n+1}<z_{2n}<z_{2n+1}$ and $z_{-2n}+z_{2n+1}=z_{-2n+1}+z_{2n}$.
Thus \(B_n\) is obtained from \(A_n\) by a Pigou-Dalton transfer.
More generally, the same finite-block dominance relation used in the SE proof is generated here by finitely many Pigou-Dalton transfers.
Hence the subsequent construction of a non-Ramsey collection applies without change.
\end{itemize}}
\end{remark}

\section{Conclusions}\label{sec5}

The results of this paper suggest that some correspondence statements in the earlier literature should be reformulated.
Those statements were developed in settings where the domains admitting real-valued social welfare functions appeared to coincide with the domains admitting constructible social welfare orders.
The results obtained here show that this coincidence is not a general feature of consequentialist equity axioms.
In particular, the domain on which a real-valued SWF satisfying a given equity-efficiency combination exists need not coincide with the domain on which an explicitly constructible SWO satisfying the same axioms exists.

Our focus on the explicit construction of social welfare orders led us to search for constructive devices that could be used beyond real-valued representation.
The lexicographic constructions developed in this paper provide such a device and illustrate the distinction between representation and explicit description.
They give explicit SWOs on well-ordered domains even in cases where real-valued representation is unavailable or subject to much stronger restrictions.
Thus restrictions arising in representation theorems should not automatically be interpreted as restrictions on explicit SWO construction.

This observation affects the interpretation of the correspondence principle suggested by earlier work.
For Strong Equity combined with monotonicity, the finite-cardinality boundary in \citet{dubey2014} is relevant for real-valued representation, but not for explicit SWO construction.
A similar qualification applies to Hammond Equity combined with Strong Pareto, as considered in \citet{dubey2014b}, and to the Pigou--Dalton transfer principle, either by itself or together with an efficiency axiom, as considered in \citet{dubey2016}.
For Hammond Equity with Pareto-type efficiency, well-ordered domains, including finite domains and \(\mathbb N\), lie on the constructive side because Theorem \ref{T1} gives explicit SWOs satisfying Strong Pareto and Strong Equity, and hence Hammond Equity.
For Hammond Equity and Weak Pareto, the negative-integers case lies on the non-Ramsey side by \citet[Proposition 1]{dubey2016a}; more generally, the obstruction is generated by descending order structure.
For Pigou--Dalton, the situation is more delicate because the axiom contains an additive conservation condition.
Nonconstructiveness in that setting requires additive richness, not merely a large or non-well-ordered domain.
Theorem \ref{T2}\textup{(b)} gives one such sufficient condition when Pigou--Dalton is combined with monotonicity.

The present results show that the class of domains admitting explicitly constructible SWOs can be larger than the class of domains admitting explicit, or even real-valued, SWF representations.
The appropriate revised formulation is therefore axiom-specific: for each equity and efficiency combination, one must separately identify the representation domain, the explicit-construction domain, and the domain on which non-Ramsey obstructions arise.

\appendix
\section{SWF satisfying AN and WD}

\begin{proposition}
The SWF satisfying AN and WD on $Y=[0,1]$, shown to exist in
\citet[Theorem 5]{mitra2007c}, implies the existence of a non-Ramsey set
for any non-trivial domain $Y$.
\end{proposition}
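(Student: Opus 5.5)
The plan is to show that, for any domain $Y$ with at least two elements $a<b$, the existence of an SWF $W:X\to\R$ satisfying Anonymity (finite permutations leave $W$ unchanged) and Weak Dominance forces a non-Ramsey collection. The argument follows the template of Proposition \ref{P21}: attach to each $\alpha\in\calS$ a non-empty open interval $\I(\alpha)$, arrange that $\I(\alpha\setminus\{t\})$ lies strictly to one side of $\I(\alpha)$, code intervals by the least-index rational, and then run the infinite-descent argument. Since $W$ restricts to $\{a,b\}^{\N}$ and both axioms are ordinal, it suffices to treat $Y=\{a,b\}$ (read $0=a$, $1=b$).

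First I would define the streams. For $\alpha=\{t_1<t_2<\cdots\}$ let $x_\alpha$ have $x_\alpha(1)=a$ and, for $n\ge 2$, $x_\alpha(n)=b$ if $n-1\in\alpha$ and $a$ otherwise. Let $y_\alpha$ coincide with $x_\alpha$ except $y_\alpha(1)=b$. Weak Dominance gives $W(x_\alpha)<W(y_\alpha)$, so $\I(\alpha)=(W(x_\alpha),W(y_\alpha))$ is non-empty.

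The key step compares $\alpha'=\alpha\setminus\{t_k\}$ with $\alpha$. The stream $y_{\alpha'}$ has a $b$ in coordinate $1$ and an $a$ in coordinate $t_k+1$, while $x_\alpha$ has an $a$ in coordinate $1$ and a $b$ in coordinate $t_k+1$; elsewhere they agree. So $y_{\alpha'}$ is a transposition of $x_\alpha$, and Anonymity gives $W(y_{\alpha'})=W(x_\alpha)$. Hence $\I(\alpha')$ and $\I(\alpha)$ are disjoint, with $\I(\alpha')$ lying to the left, and the right endpoint of one is the left endpoint of the other. This is all the separation needed: define $\gamma(\alpha)=\min\{n: q_n\in\I(\alpha)\}$ for a fixed enumeration of $\mathbb{Q}$. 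Then $\gamma(\alpha)\ne\gamma(\alpha\setminus\{t\})$ for every $t\in\alpha$.

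Next, set $\OmegaC=\{\alpha:\gamma(\alpha)<\gamma(\alpha\setminus\{n\})\ \forall n\in\alpha\}$. The two halves of the non-Ramsey argument can then be copied verbatim from Proposition \ref{P21}.
\begin{itemize}
\item If $\calS(\alpha)\subseteq\OmegaC$, the chain $\beta_m=\{\alpha_2,\alpha_4,\ldots\}\cup\{\alpha_1,\ldots,\alpha_{2m-1}\}$ yields an infinite strictly decreasing sequence of values of $\gamma$.
\item If $\calS(\alpha)\cap\OmegaC=\varnothing$, then repeatedly deleting a witnessing element yields another infinite descending sequence in $\N$.
\end{itemize}
Both are impossible, so $\OmegaC$ is non-Ramsey.

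The main obstacle is conceptual rather than computational. One must make sure that Anonymity is used only for finite permutations; here a single transposition suffices. One must also check that strict disjointness survives when the intervals merely share an endpoint. Because the intervals are open, sharing an endpoint still gives disjointness, so $\gamma$ is well defined and distinct on neighbours. A secondary point is the claim about ``any non-trivial domain'': this is exactly the reduction to a two-element subdomain, which I would state explicitly at the start.
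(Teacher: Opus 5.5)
Your proposal is correct and follows the paper's proof essentially step for step. It uses the same streams $x_\alpha,y_\alpha$ on a two-element subdomain, and the same single transposition of coordinates $1$ and $t_k+1$ under Anonymity to obtain $W(y_{\alpha\setminus\{t_k\}})=W(x_\alpha)$. It then carries over the rational-coding and infinite-descent argument from Proposition \ref{P21}, just as the paper does. Your appeal to the axioms being ordinal is harmless but unnecessary, since restricting $W$ to $\{a,b\}^{\N}$ already preserves both AN and WD.
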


\begin{proof}
We consider $Y$ to contain two elements $a$ and $b$.
For ease of exposition, take $a=0$ and $b=1$.
Let
\[
\alpha\in[\mathbb{N}]^{\mathbb{N}},
\qquad
\alpha=(n_1,n_2,n_3,\ldots).
\]
Define the sequences $x_{\alpha}$ and $y_{\alpha}$ by
\[
x_{\alpha}(n)=
\begin{cases}
0, & n=1,\\[1mm]
1, & n=n_k+1 \text{ for some } k\in\mathbb{N},\\[1mm]
0, & \text{otherwise},
\end{cases}
\]
and
\[
y_{\alpha}(n)=
\begin{cases}
1, & n=1,\\[1mm]
x_{\alpha}(n), & n>1,
\end{cases}
\]
respectively.
Since
\[
x_{\alpha}(1)=0<1=y_{\alpha}(1), \;\text{and}\; 
x_{\alpha}(n)=y_{\alpha}(n),
\qquad n>1,
\]
by WD,
\[
y_{\alpha}\succ x_{\alpha},
\quad\text{or}\quad
W(y_{\alpha})>W(x_{\alpha}).
\]
Now drop any $n_k\in\alpha$ to obtain
\[
\alpha^{\prime}
=
\alpha\setminus\{n_k\},
\qquad
\alpha^{\prime}\in[\alpha]^{\mathbb{N}}.
\]
The changes in the relevant coordinates are displayed below:
\[
\begin{array}{c|cc}
&1&n_k+1\\ \hline
x_{\alpha^{\prime}}&0&0\\[1mm]
y_{\alpha^{\prime}}&1&0\\[1mm]
x_{\alpha}&0&1\\[1mm]
y_{\alpha}&1&1
\end{array}
\]
The sequence $y_{\alpha^{\prime}}$ is obtained from $x_{\alpha}$ by
switching coordinates $1$ and $n_k+1$. Therefore, by AN,
\[
x_{\alpha}\sim y_{\alpha^{\prime}},
\qquad\text{or}\qquad
W(x_{\alpha})=W(y_{\alpha^{\prime}}).
\]
Combining it with $W(x_{\alpha^{\prime}}) < W(y_{\alpha^{\prime}})$, we obtain
\[
W(x_{\alpha^{\prime}})
<
W(y_{\alpha^{\prime}})
=
W(x_{\alpha})
<
W(y_{\alpha}).
\]
For each $\alpha\in[\mathbb{N}]^{\mathbb{N}}$, define the nonempty open
interval
\[
\mathbb{I}(\alpha)
=
\bigl(W(x_{\alpha}),W(y_{\alpha})\bigr).
\]
Similarly,
\[
\mathbb{I}(\alpha^{\prime})
=
\bigl(
W(x_{\alpha^{\prime}}),
W(y_{\alpha^{\prime}})
\bigr).
\]
Since
\[
W(y_{\alpha^{\prime}})
=
W(x_{\alpha}),
\]
we have
\[
\mathbb{I}(\alpha^{\prime})
\cap
\mathbb{I}(\alpha)
=
\varnothing.
\]
From here on, we can replicate the SE proof to show that any SWF
satisfying AN + WD implies the existence of a non-Ramsey set.
\end{proof}

\section{SWF satisfying AN and PD}

\begin{proposition}
\emph{The existence of an SWF satisfying AN and PD on $Y=[0,1]$, shown in
\citet[Proposition 5]{alcantud2010a}, implies the existence of a
non-Ramsey set for every non-trivial domain $Y$.}
\end{proposition}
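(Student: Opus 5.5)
The plan is to copy the template of the preceding appendix proposition, with the single Weak Dominance step replaced by a single Pigou--Dalton step. Everything then reduces to the interval-disjointness and infinite-descent argument of Proposition \ref{P21}.

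First I fix the domain. A Pigou--Dalton comparison needs four utility levels $a<b<c<d$ in $Y$ with $a+d=b+c$. This is what I take ``non-trivial'' to mean for PD; for $Y=[0,1]$ one can take $a=0$, $b=\frac13$, $c=\frac23$, $d=1$. Any SWF on a larger domain restricts to one on $\{a,b,c,d\}^{\N}$, and AN and PD are preserved, so it suffices to work there. For $\alpha=\{n_1<n_2<\cdots\}\in[\N]^{\N}$ I define $x_\alpha$ as follows:
\begin{itemize}
\item $x_\alpha(1)=a$ and $x_\alpha(2)=d$;
\item for $m\in\alpha$, coordinates $(2m+1,2m+2)$ receive $(b,c)$;
\item for $m\notin\alpha$, coordinates $(2m+1,2m+2)$ receive $(a,d)$.
\end{itemize}
I let $y_\alpha$ agree with $x_\alpha$ except that $y_\alpha(1)=b$ and $y_\alpha(2)=c$. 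Since $x_\alpha(2)=d>c>b>a=x_\alpha(1)$ and $a+d=b+c$, passing from $x_\alpha$ to $y_\alpha$ is a PD transfer, so $W(y_\alpha)>W(x_\alpha)$.

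Next, let $\alpha'=\alpha\setminus\{n_k\}$. The stream $y_{\alpha'}$ has $(b,c)$ at coordinates $1,2$ and $(a,d)$ at $(2n_k+1,2n_k+2)$. The stream $x_\alpha$ has $(a,d)$ at $1,2$ and $(b,c)$ at $(2n_k+1,2n_k+2)$. All other coordinates agree. So $y_{\alpha'}$ is a finite permutation of $x_\alpha$, obtained by swapping coordinate $1$ with $2n_k+1$ and coordinate $2$ with $2n_k+2$. AN then gives $W(y_{\alpha'})=W(x_\alpha)$. Combining this with PD applied to $\alpha'$ yields
\[
W(x_{\alpha'})<W(y_{\alpha'})=W(x_\alpha)<W(y_\alpha).
\]
Hence the nonempty open intervals $\mathbb{I}(\alpha)=(W(x_\alpha),W(y_\alpha))$ and $\mathbb{I}(\alpha')$ are disjoint whenever $\alpha'$ is obtained from $\alpha$ by deleting one element.

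Finally I reuse the rest of the proof of Proposition \ref{P21} verbatim. I fix an enumeration $\{q_n\}$ of $\mathbb{Q}$ and set $\gamma(\alpha)=\min\{n:q_n\in\mathbb{I}(\alpha)\}$; disjointness gives $\gamma(\alpha)\neq\gamma(\alpha\setminus\{n\})$. I then take
\[
\Omega=\{\alpha:\gamma(\alpha)<\gamma(\alpha\setminus\{n\})\ \text{for all } n\in\alpha\}.
\]
The two infinite-descent arguments of Proposition \ref{P21} show that each $[\alpha]^{\N}$ meets both $\Omega$ and its complement. The first adds back odd-indexed elements one at a time; the second repeatedly deletes an element witnessing failure of membership. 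These arguments use only the disjointness property, not how it was obtained.

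The main obstacle is the domain hypothesis rather than the combinatorics. PD's conservation condition requires an additive quadruple $a+d=b+c$ inside $Y$, and a bare four-element set need not contain one. I would therefore state the result for domains containing such a quadruple, which covers $[0,1]$ and every interval. I would also check carefully that the AN axiom in use permits the finite permutation above, which moves two pairs of coordinates; if it allows only transpositions, two successive transpositions suffice.
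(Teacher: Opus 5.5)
Your proposal is correct and follows the paper's proof essentially step for step. The paper uses the same pattern: one PD transfer at generations $1,2$; a double swap of coordinates $1\leftrightarrow 2n_k+1$ and $2\leftrightarrow 2n_k+2$ under AN, giving $W(y_{\alpha'})=W(x_\alpha)$ and hence disjoint intervals; then the rational-enumeration and infinite-descent argument of Proposition~\ref{P21}. Your explicit restriction to domains containing an additive quadruple $a<b<c<d$ with $a+d=b+c$ is exactly what the paper's proof tacitly assumes when it says ``non-trivial'', since without such a quadruple PD is vacuous.
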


\begin{proof}
Let
\[
t\in[\mathbb{N}]^{\mathbb{N}},
\qquad
t=(t_1,t_2,t_3,\ldots).
\]
Recall, we need four distinct elements in $Y$ to define a PD ranking.
Let $a,b,c,d\in Y$ with $a<b<c<d$ and $a+d=b+c$.
Define the sequence $x_t$ by
\[
x_t(m)=
\begin{cases}
d, & m=1,\\
a, & m=2,\\
c, & m=2n+1,\quad n\in t,\\
b, & m=2n+2,\quad n\in t,\\
d, & m=2n+1,\quad n\notin t,\\
a, & m=2n+2,\quad n\notin t,
\end{cases}
\]
and define $y_t$ by
\[
y_t(m)=
\begin{cases}
c, & m=1,\\
b, & m=2,\\
x_t(m), & m>2.
\end{cases}
\]
Thus, $x_t$ is obtained by assigning the pair $(d,a)$ to generations
$1$ and $2$ and, for each $n\in\mathbb{N}$, assigning the pair $(c,b)$
to generations $2n+1$, $2n+2$ if $n\in t$, and assigning the pair
$(d,a)$ to the same generations if $n\notin t$.
The sequence $y_t$ is obtained from $x_t$ by replacing the pair
$(d,a)$ at generations $1$ and $2$ with the pair $(c,b)$. Therefore,
by PD,
\[
y_t\succ x_t,
\qquad\text{or}\qquad
W(y_t)>W(x_t).
\]
Now drop any $t_k\in t$ to obtain
\[
t^{\prime}
=
t\setminus\{t_k\},
\quad
t^{\prime}\in[t]^{\mathbb{N}}.
\]
The changes in the relevant coordinates are displayed below:
\[
\begin{array}{c|cc|cc}
&1&2&2t_k+1&2t_k+2\\ \hline
x_{t^{\prime}}&d&a&d&a\\[1mm]
y_{t^{\prime}}&c&b&d&a\\[1mm]
x_t(\pi)&c&b&d&a\\[1mm]
x_t&d&a&c&b\\[1mm]
y_t&c&b&c&b
\end{array}
\]
The sequence $x_t(\pi)$ is obtained from $x_t$ by switching
coordinates $1$ and $2t_k+1$, and coordinates $2$ and $2t_k+2$.
Indeed, the pair $(d,a)$ at coordinates $1$ and $2$ in $x_t$ is
switched with the pair $(c,b)$ at coordinates $2t_k+1$ and
$2t_k+2$.
Therefore, by AN,
\[
x_t\sim x_t(\pi),
\quad\text{or}\quad
W(x_t)=W(x_t(\pi)).
\]
Observe that
\[
y_{t^{\prime}}=x_t(\pi).
\]
Moreover, $y_{t^{\prime}}$ is obtained from $x_{t^{\prime}}$ by
replacing the pair $(d,a)$ at generations $1$ and $2$ with the pair
$(c,b)$. Therefore, by PD,
\[
y_{t^{\prime}}\succ x_{t^{\prime}},
\qquad\text{or}\qquad
W(y_{t^{\prime}})>W(x_{t^{\prime}}).
\]
Combining these relations, we obtain
\[
W(x_{t^{\prime}})
<
W(y_{t^{\prime}})
=
W(x_t(\pi))
=
W(x_t)
<
W(y_t).
\]
For each $t\in[\mathbb{N}]^{\mathbb{N}}$, define the nonempty open
interval
\[
\mathbb{I}(t)
=
\bigl(W(x_t),W(y_t)\bigr).
\]
Similarly,
\[
\mathbb{I}(t^{\prime})
=
\bigl(W(x_{t^{\prime}}),W(y_{t^{\prime}})\bigr).
\]
Since
\[
W(y_{t^{\prime}})
=
W(x_t),
\]
we have
\[
\mathbb{I}(t^{\prime})
\cap
\mathbb{I}(t)
=
\varnothing.
\]
From here on, we can replicate the SE proof to show that any SWF
satisfying AN + PD implies the existence of a non-Ramsey set.
\end{proof}

\bibliographystyle{plainnat}
\bibliography{AWPAnonymity}

\end{document}